\newcommand*{\calL}{\ensuremath{\mathcal{L}}}	
\begin{document}
\title{Consistent Simplification of\\ Polyline Tree Bundles}
%
%
\author{
Yannick Bosch\inst{1}\and
Peter Schäfer\inst{1}\and
Joachim Spoerhase\inst{2,3}\and
Sabine Storandt\inst{1}\and
Johannes Zink\inst{2}
}
\authorrunning{Y. Bosch et al.}
%
\institute{
University of Konstanz, Germany\and
University of Würzburg, Germany\and
University of Aalto, Finland
}
\maketitle              
\begin{abstract}
The \textsc{Polyline Bundle Simplification} (PBS) problem is a generalization of the classical polyline simplification problem. Given a set of polylines, which may share line segments and points, PBS asks for the smallest consistent simplification of these polylines with respect to a given distance threshold. Here, consistent means that each point is either kept in or discarded from all polylines containing it. In previous work, it was proven that PBS is NP-hard to approximate within a factor of $n^{\frac{1}{3}-\varepsilon}$ for any $\varepsilon > 0$ where $n$ denotes the number of points in the input. This hardness result holds even for two polylines. In this paper we first study the practically relevant setting of planar inputs. While for many combinatorial optimization problems the restriction to planar settings
makes the problem substantially easier, we show that the
inapproximability bound known for general inputs continues to hold even for planar inputs. We proceed with the interesting special case of PBS where the polylines form a rooted tree. Such tree bundles
naturally arise in the context of movement data visualization. We prove that optimal simplifications of these tree bundles can be computed in $\mathcal{O}(n^3)$ for the Fr\'echet distance and in $\mathcal{O}(n^2)$ for the Hausdorff distance (which both match the
computation time for single polylines). Furthermore, we present a greedy heuristic that allows to decompose polyline bundles into tree bundles in order to make our exact algorithm for trees useful on general inputs. The applicability of our approaches  is demonstrated in an
experimental evaluation on real-world data.

\keywords{polyline simplification \and hardness of approximation \and tree graph \and dynamic program \and planarity}
\end{abstract}
\section{Introduction}
Polyline simplification is a well-studied optimization problem~\cite{dou73,her92,ima88,cha96,vis93} with a wide field of applications, e.g., in computer graphics, map visualization, or data smoothing.
In the classical sense, polyline simplification means removing some polyline bend points
while keeping a small distance to the original polyline.
Given a distance threshold, the optimal simplification of a single polyline, i.e., the simplification keeping as few polyline points as possible, respecting that threshold can be computed in polynomial time \cite{ima88}. However, in case the input is a set of (partially) overlapping polylines, individual simplification of each polyline leads to visually unpleasing results as shared parts may be simplified in different ways. Moreover the visual complexity might even increase which opposes the simplification concept.
Aiming at more appealing and sensible results, the problem of \textsc{Polyline Bundle Simplification (PBS)} was introduced in \cite{spo20}.  It adds as an additional constraint that shared parts must be simplified consistently (i.e. each point is either kept in or discarded from all polylines containing it).

\begin{definition}[Polyline Bundle Simplification \cite{spo20}]
\label{def:pbs}
An instance of PBS 
is a triple $(P,\calL,\delta)$ where $P=\{p_1,\ldots,p_n\}$ is a set of $n$ points in the plane, $\calL = \{L_1,\ldots L_\ell\}$ is a set of $\ell$ simple polylines, each represented as a list of points from $P$ (here, simple means that each point appears at most once in $L_i$), and $\delta$ is a distance parameter.
The goal is to obtain a minimum size subset $P^* \subseteq P$ 
such that for each polyline $L \in \calL$ its induced simplification $L\cap P^*$ contains the start and end point of $L$ and has a segment-wise distance of at most $\delta$ to $L$.
\end{definition}

PBS is a generalization of the classical polyline simplification problem  but was proven to be NP-hard to approximate within a factor of $n^{\frac{1}{3}-\varepsilon}$ for any $\varepsilon > 0$ already for two polylines when using the Hausdorff or the Fr\'echet distance \cite{spo20}.
Motivated by this strong hardness result, the goal of this paper is to investigate the complexity of practically interesting special cases of PBS and to design and evaluate practical algorithms.
\bigskip\\
\textbf{Related Work.} Simultaneous simplification of multiple polylines was considered in previous work e.g. in the context of computational biology or for map generation. The so called \emph{chain pair simplification problem} asks for two polylines for their simplifications such that for given $k \in \mathbb{N}$ and $\delta >0$ each simplified chain contains at most $k$ segments, and the Fr\'echet distance between them is at most~$\delta$~\cite{ber08}. The problem arises in protein structure alignment or map matching tasks and was studied from a theoretical and practical perspective \cite{wyl13,fan15,fan16}.
While the basic idea to preserve resemblance between polylines after simplification is similar to the motivation behind PBS, chain pair simplification only ever considers two polylines  and does not put further restrictions on the simplification of shared parts.
Analyzing bundles of (potentially overlapping and intersecting) movement trajectories is an important means to study group behavior and to generate maps. For example, the RoadRunner approach \cite{he18} infers high-precision maps from GPS trajectories. In~\cite{buc18}, an approach was proposed that computes a concise graph that represents all trajectories in a given set sufficiently well. But these and similar methods do not produce valid simplifications of each input polyline, but allow to discard outliers or to let a polyline be represented by a completely disjoint polyline which is quite different from the PBS setting.

The PBS problem was introduced in \cite{spo20}.
In addition to the  above mentioned inapproximability result, there were also two algorithms for PBS discussed in the paper. For PBS with the Fr\'echet distance, a bi-criteria $(O(\log(\ell+n)),2)$-approximation algorithm was presented. This algorithm is allowed to return results within a distance threshold of $2\delta$, and based on this constraint relaxation achieves a logarithmic approximation factor (compared to the optimal solution for $\delta$) in polynomial time. Furthermore, it was shown that PBS is fixed-parameter tractable in the number $k$ of points that are shared by at least two polylines, based on a simple algorithm with a running time of $O(2^k \cdot \ell \cdot n^2 + \ell \cdot n^3)$.
\bigskip\\
\textbf{Contribution.}
We present the following new theoretical and practical results:
\begin{itemize}
\item PBS remains NP-hard to approximate to within a factor~$n^{\frac{1}{3}-\epsilon}$ for any $\epsilon>0$ on \emph{planar} inputs (Section \ref{SEC:planarhard}).
\item The special case of PBS where the polylines form a rooted tree can be solved optimally in polynomial time. Similar to the Imai-Iri algorithm for simplification of a single polyline \cite{ima88}, our algorithm precomputes the possible set of shortcuts for the given distance threshold and thereupon transforms the given geometric problem  into a graph problem.  But while in the Imai-Iri algorithm a simple search for the minimum link-path in the shortcut graph suffices, we need a more intricate dynamic programming approach (DP) to deal with the tree structure (Section \ref{SEC:treebundles}).
\item We devise a greedy heuristic that decomposes a general polyline bundle into tree bundles, which then can be simplified independently and optimally  with our DP (Section \ref{SEC:decomposition}).
\item In the experimental evaluation, we use our new approach to simplify polyline bundles that model movement data or public transit maps. We compare our approach in terms of efficiency and quality to the bi-criteria approximation algorithm proposed in \cite{spo20}  (Section \ref{SEC:experiments}).
\end{itemize}

\section{Preliminaries}
The two most commonly used distance functions $d$ to govern polyline simplification are the Fr\'echet distance $d_F$ and the Hausdorff distance $d_H$.
In the context of polyline simplification, the distance function $d$ is used to measure the distance of a line segment $(a,b)$ in the simplification to the corresponding sub-polyline of $L$, which we abbreviate by $L(a,b)$.
For any line segment $(a,b)$ in a valid simplification, we require $d((a,b), L(a,b)) \leq \delta$.
\todo{COCOON reviewer 2 suggests: In practical polyline simplification, one often draws a "tube" of width delta around the input polyline and requires the simplification to stay inside. How is this reflected in the Frechet/Hausdorff setting?}%
Given a single polyline $L$ of length $n$, a distance function $d$, and a distance threshold $\delta$, an optimal simplification of $L$ can be computed in time $\mathcal{O}(n^3)$ using the Imai-Iri algorithm \cite{ima88}. The algorithm starts by constructing a so called shortcut graph, in which there is an edge between pairs of points $a, b$ in $L$ if $d((a,b), L(a,b)) \leq \delta$. Checking this property for each of the $\Theta(n^2)$ point pairs takes $\mathcal{O}(n^3)$ time when using the Fr\'echet or the Hausdorff distance. In the created shortcut graph, the best simplification can be identified by computing the minimum-link path between the start and the end node of $L$ with a BFS run. As this only takes time linear in the graph size,  the shortcut graph computation dominates the overall running time.

An impoved method for shortcut graph construction presented by Chan and Chin \cite{cha96} can reduce the running time to $\mathcal{O}(n^2)$ for $d = d_H$. The algorithm  uses sweeps to first compute directed shortcuts from which then subsequently the valid undirected shortcuts can be deduced. More precisely, in each sweep, every point $p$ is considered as possible starting point of a directed shortcut in forward direction. To then efficiently decide whether a later point on the polyline is a valid endpoint of such a directed shortcut, a cone is maintained  in which all valid endpoints have to lie. Updating that cone and making the containment check is possible in constant time, hence all directed shortcuts starting in $p$ can be computed in $\mathcal{O}(n)$. The respective total time for considering every point as starting point in both sweeps is $\mathcal{O}(n^2)$. A valid undirected shortcut exists if and only if both of its directed versions are constructed in the sweep phase. This obviously can be checked for each potential shortcut in constant time, leading to an overall shortcut graph construction time of $\mathcal{O}(n^2)$.

\section{Hardness of Approximating Planar Polyline Bundle Simplification}\label{SEC:planarhard}
In this section, we show that we cannot approximate polyline bundle simplification on planar inputs
by the same polynomial factor that was
previously shown for general, non-planar inputs~\cite{spo20}.
Here, planar means that no two polyline segments touch or intersect each other unless they
share a common endpoint.

\begin{theorem}
	\label{CLM:approx-hardness}
	PBS with a planar polyline bundle as input is NP-hard to approximate within a factor of~$n^{\frac{1}{3} - \varepsilon}$ for any $\varepsilon > 0$,
	where $n$ is the number of points in the polyline bundle.
\end{theorem}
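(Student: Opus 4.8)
The plan is to adapt the existing $n^{1/3-\varepsilon}$ inapproximability proof for general PBS (which already works with just two polylines) into a reduction that produces a \emph{planar} polyline bundle. Since the hardness in \cite{spo20} is obtained by a reduction from some NP-hard problem (presumably an Independent Set or SAT-type problem) that encodes choices via the shared points of two crossing polylines, the natural approach is to re-examine that reduction and replace every crossing of polyline segments with a planar gadget that simulates the same "consistency" constraint without an actual geometric intersection. I would therefore first recall the structure of the original construction: which points are shared, what the two polylines look like, and why an optimal simplification must make a binary decision at each shared point that corresponds to the combinatorial structure being reduced from.

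The main technical work is the \emph{planarization}. First I would identify the places where the original two-polyline construction is non-planar — typically these are the spots where one polyline must "cross" another to force a shared vertex, or where many polylines fan through a common region. At each such crossing I would insert a local gadget: a small cluster of points and short segments (possibly introducing an additional short polyline through the crossing region) arranged so that (a) the gadget itself is planar and introduces no new crossings with the rest of the drawing, (b) the only cheap way to simplify the gadget is consistent with exactly the two "states" that the original crossing encoded, and (c) the gadget contributes only $O(1)$ (or at worst a sub-polynomial number of) extra points, so the size parameter $n$ grows by at most a constant factor and the inapproximability exponent $1/3-\varepsilon$ is preserved. I would likely exploit the freedom in placing points: because $\delta$ is part of the input, segments can be made arbitrarily short or long, so geometric room to route things without crossings is available. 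I would also need to check both distance measures (Fréchet and Hausdorff), mirroring the fact that the original result holds for both.

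After the gadget is designed I would argue correctness in the two usual directions. For the forward direction: given a solution to the source instance, I construct a simplification of the planar bundle of the claimed small size by making, in each gadget, the choice dictated by the source solution, and verify every induced shortcut has distance $\le \delta$. For the backward direction: given any simplification of size below the threshold, I argue it must respect the gadget constraints (otherwise it is too large or infeasible), extract from it a binary assignment, and show it solves the source instance with the corresponding quality. Combining with the size bound $n = \Theta(N)$ where $N$ is the source size (or whatever blow-up the original proof had), the non-approximability factor transfers: an $n^{1/3-\varepsilon}$-approximation for planar PBS would yield a forbidden approximation for the source problem.

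The step I expect to be the main obstacle is designing the planar crossing gadget so that it \emph{robustly} forces exactly the intended two-state behavior while remaining drawable without introducing any new intersections with the (already intricate) surrounding construction — in particular ensuring that no "cross-talk" shortcut between a gadget and a distant part of the bundle becomes valid under $\delta$, and that the gadget does not accidentally give the simplification a cheaper third option that breaks the reduction. A secondary subtlety is keeping the total point count within a constant factor of the original so the cubic-root exponent is untouched; if a gadget needs $\omega(1)$ points this must be accounted for carefully in the final calculation of the hardness exponent.
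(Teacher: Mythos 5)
Your overall strategy matches the paper's: both start from the $n^{1/3-\varepsilon}$-hardness reduction of Spoerhase et al.\ (which is from \textsc{Minimum Independent Dominating Set}, with vertex, edge, and neighborhood gadgets laid out in a grid-like fashion) and planarize it while preserving the gap. However, there is a genuine gap at exactly the point you flag as the main obstacle: you leave the crossing gadget as a black box, and the framing you give for it is off-target. The crossings that need to be removed are the \emph{incidental} ones between a vertex gadget and the zigzags of edge/neighborhood gadgets of \emph{non-incident} edges; these carry no semantics, so the goal is not to build a gadget that ``robustly forces exactly the intended two-state behavior'' but to make each crossing semantically \emph{inert}. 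In a planar bundle, two geometrically crossing polylines must share a point at the crossing, so the only available move is to insert a new shared point there --- and the entire difficulty is to do this without enabling new shortcuts that break the gadget mechanics. Your proposal does not supply the idea that resolves this.

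The paper's resolution is: (i) reshape the gadgets (adjusting $y$-coordinates and horizontal spacing) so that every crossing occurs within a small radius $\eta$ of existing polyline points on both polylines, where $\eta$ is defined as the minimum slack $\min\{d((p,q),o) \mid d((p,q),o)>\delta\}-\delta$ over all point triples on a common polyline; (ii) prove via a triangle-inequality argument that any shortcut leaving a crossing point is already a shortcut leaving one of its two neighboring ``skip points,'' so no new shortcuts arise; (iii) observe that keeping a crossing point instead of its two skip points can save at most one point per crossing, giving $\mathsf{OPT^n} \le 2\,\mathsf{OPT^p} \le 2\,\mathsf{OPT^n}$, which degrades the yes/no gap only by a constant factor; and (iv) recount the points ($n \le 30c\hat n^3$ instead of $10c\hat n^3$), which the $\varepsilon$ absorbs. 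Your step (c) about keeping the blow-up sub-polynomial is the right instinct for (iv), and your two-direction correctness argument is standard, but without the $\eta$-placement lemma and the factor-2 accounting of the optimum the reduction does not go through: an arbitrarily placed crossing point can create shortcuts that let the adversary skip the long zigzag pieces for free, destroying soundness.
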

We build upon the hardness reduction from minimum independent dominating set (MIDS) from ~\cite{spo20} by modifying their gadgets and the arrangement of their gadgets such that the constructed polyline bundle is planar.
In the MIDS problem,
we are given a graph $G = (V, E)$ with $\hat{n}$ vertices and $c \hat{n}$ edges,
and the task is to find a set~$S$ of vertices
that is independent (no two vertices in~$S$ are adjacent) and dominating (each vertex not in $S$ has a neighbor in~$S$).
This problem has been shown to be NP-hard to approximate within a factor of~$\hat{n}^{1 - \varepsilon}$ for any $\varepsilon > 0$~\cite{hal93},
even for constant~$c$, i.e., sparse graphs.

\begin{figure}[t]
	\centering
	\begin{subfigure}[b]{0.12 \linewidth}
		\centering
		\includegraphics[page=2,scale=0.9,trim=274 138 815 0,clip]{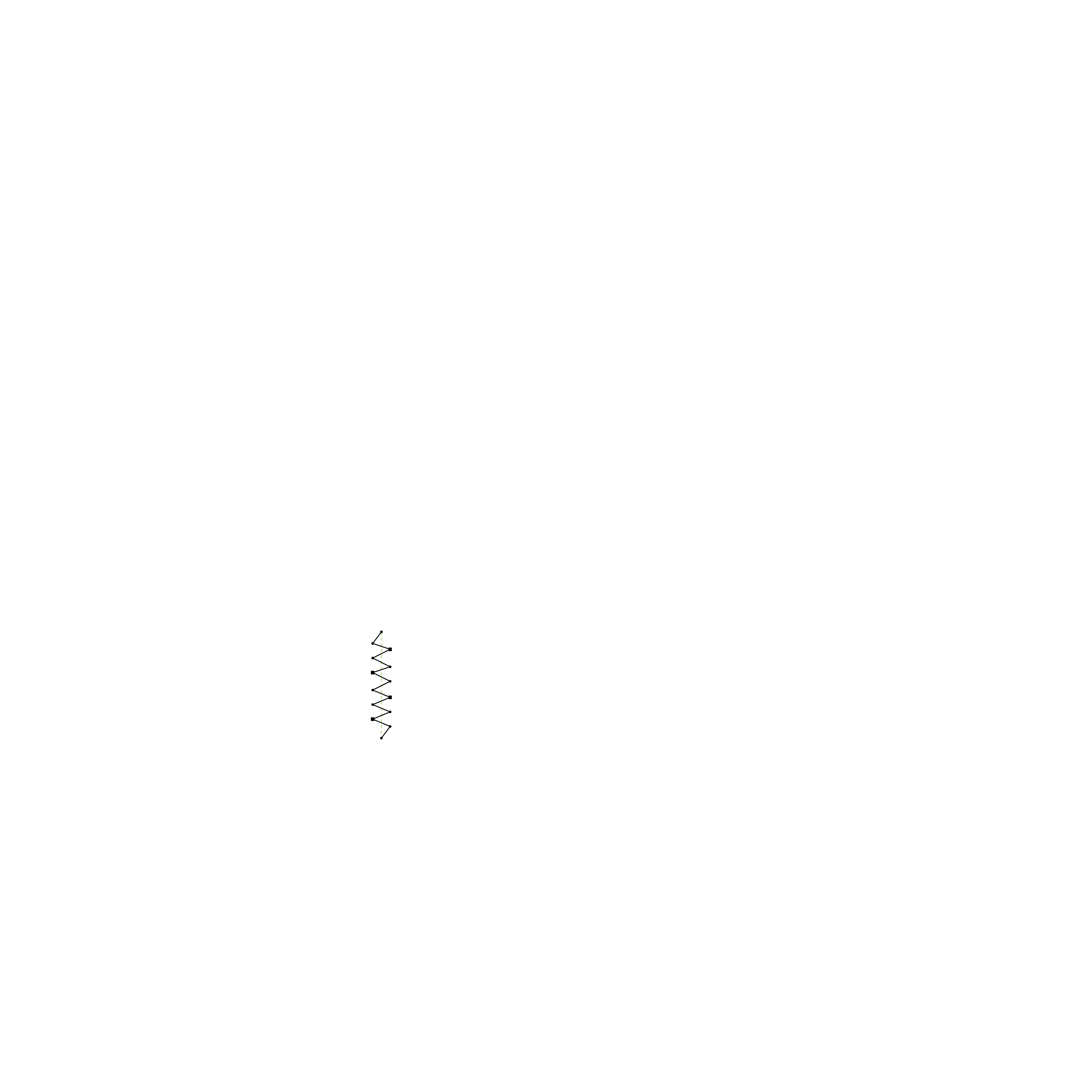}
		\caption{Vertex gadget.}
		\label{FIG:MIDS-reduction-vertex-gadget}
	\end{subfigure}
	\hfill
	\begin{minipage}[b]{.86\textwidth}
		\centering
		
		\begin{subfigure}[b]{1 \linewidth}
			\centering
			\includegraphics[page=15,scale=1.0,trim=266 232 764 33,clip]{mids-reduction}
			\caption{Crossing for planarization; a new polyline point (green square) is inserted close to polyline points of a vertex gadget (gray) and an edge or neighborhood gadget (black).}
			\label{FIG:MIDS-reduction-crossing-gadget}
		\end{subfigure}
		
		\bigskip
		
		\begin{subfigure}[b]{1 \linewidth}
			\centering
			\includegraphics[page=3,scale=0.8,trim=294 264 544 14,clip]{mids-reduction}
			\caption{Edge gadget; two polyline points are shared with vertex gadgets.}
			\label{FIG:MIDS-reduction-edge-gadget}
		\end{subfigure}
		
		\bigskip
		
		\begin{subfigure}[b]{1 \linewidth}
			\centering
			\includegraphics[page=11,scale=.8,trim=332 40 432 250,clip]{mids-reduction}
			\caption{Neighborhood gadget; the middle points are shared with vertex gadgets.}
			\label{FIG:MIDS-reduction-neighborhood-gadget}
		\end{subfigure}
	\end{minipage}
	\caption{Schematization of the gadgets of the reduction from MIDS to PBS with planar instances. Shortcuts are indicated by dashed green line segments and points that are shared between two gadgets are drawn as squares.
	}
	\label{FIG:MIDS-reduction}
\end{figure}

The construction uses vertex gadgets allowing exactly one shortcut; see Figure~\ref{FIG:MIDS-reduction-vertex-gadget}.
Taking this shortcut represents that the corresponding vertex of the minimum independent dominating set instance is \emph{not} included in~$S$.
Moreover it uses edge gadgets connecting for each edge its two corresponding vertex gadgets.
They are comprised of long
zizag pieces that can only be skipped if the independent set property for each edge is fulfilled; see Figure~\ref{FIG:MIDS-reduction-edge-gadget}.
Similarly, we have a neighborhood gadget connecting for each vertex the vertex gadgets of its neighborhood to ensure that the domination property is satisfied; see Figure~\ref{FIG:MIDS-reduction-neighborhood-gadget}.

\begin{figure}[bt]
	\centering
	\includegraphics[page=10,width=\linewidth,trim=0 107 0 0,clip]{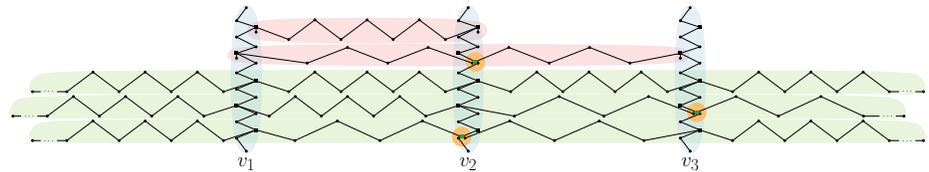}
	\caption{Combination of three vertex gadgets (for the vertices $v_1, v_2, v_3$; blue background) with two edge gadgets (for the edges $v_1 v_2$ and $v_1 v_3$; red background) and three neighborhood gadgets (for the vertices $v_1, v_2, v_3$; green background).
		We use a crossing as in~Figure~\ref{FIG:MIDS-reduction-crossing-gadget} between the vertex gadget of $v_2$ and the edge gadget of $v_1 v_3$, between the vertex gadget of $v_3$ and the neighborhood gadget of $v_2$, and between the vertex gadget of $v_2$ and the neighborhood gadget of $v_3$}
	\label{FIG:MIDS-reduction-combined-example}
\end{figure}

On a high level, vertex gadgets are vertical pieces arranged horizontally next to each other
and edge and neighborhood gadgets are horizontal pieces arranged vertically above each other and across the vertex gadgets.
This yields a grid-like structure with many crossings between
vertex gadgets and unrelated edge/neighborhood gadgets; 
see Figure~\ref{FIG:MIDS-reduction-combined-example}.
The key idea is to planarize the non-planar construction by replacing
crossings by new polyline points.  However, we have to be careful
where to insert these new points.  Just inserting points wherever a
crossing occurs would allow new shortcuts and hence destroy the
mechanics of the gadgets.  We can prevent this from happening by
reshaping the construction so that crossings occur only close to
existing polyline points.  There, we can insert new polyline points
onto the crossings sufficiently close to existing other polyline
points.  This ensures that for any shortcut starting or ending at a
crossing point, this crossing point could either be replaced by an
original polyline point or that the total saving incurred by the
new crossing points does not severely impact the gap in the objective
function between completeness and soundness in the hardness proof. We describe in the appendix in detail, how to reshape the gadgets in order to ensure correctness.
Our basic reduction uses one polyline per gadget.
But we can connect all vertex gadgets to one polyline and all edge and neighborhood gadgets to another polyline, which means our results hold true even for only two polylines (the connections have to be made carefully to not violate planarity as discussed in the appendix).
\begin{corollary}
	PBS with a planar polyline bundle as input is not fixed-parameter tractable (FPT) in the number of polylines~$\ell$.
	In particular, PBS with two polylines in a planar polyline bundle is 
	already NP-hard to approximate within a factor of~$n^{\frac{1}{3} - \varepsilon}$ for any $\varepsilon > 0$.
\end{corollary}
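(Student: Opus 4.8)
The plan is to obtain the corollary directly from (the proof of) Theorem~\ref{CLM:approx-hardness} by showing that the planar instance produced there can be rewired so that it consists of only two polylines, without affecting planarity, simplicity of the polylines, or the completeness/soundness analysis. Recall the structure of the reduction: each vertex gadget is a short vertical subpath, each edge or neighborhood gadget is a long horizontal subpath that runs across (and, after planarization, around) the vertex gadgets, and gadgets meet only at a few shared points. In the basic construction every gadget is its own polyline. The first step is to concatenate all vertex gadgets into a single polyline~$L_{\mathrm{vtx}}$ and all edge and neighborhood gadgets into a single polyline~$L_{\mathrm{eh}}$, by inserting short connector subpaths (on fresh points, so simplicity is preserved) between consecutive gadgets of the same type.

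The second step, which I expect to be the main obstacle, is to route these connectors so that (i)~they do not cross any polyline segment, and (ii)~they do not create any new $\delta$-shortcut that could bridge two distinct gadgets or reach the interior of a gadget from outside, which would destroy the gadget mechanics. For~$L_{\mathrm{vtx}}$ this is easy: the vertex gadgets sit side by side and all horizontal gadgets lie above them (or are routed around them), so consecutive vertex gadgets can be joined by connectors running below the entire arrangement. For~$L_{\mathrm{eh}}$ we route the connectors through the free region to the far left (alternating with the far right, if needed) of the construction, outside the horizontal extent spanned by the vertex gadgets, so that they avoid all vertical gadgets and all already-planarized horizontal gadgets. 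In both cases the connectors are realized as long detours whose bend points are placed far apart---farther than~$\delta$ from one another and from any original polyline point---so that no line segment joining two points separated by a connector can be within Fr\'echet or Hausdorff distance~$\delta$ of the corresponding subpath. Hence the shortcut graph restricted to each original gadget is unchanged, the reduction's analysis carries over verbatim, and the only effect is an $O(1)$ additive blow-up in~$n$ per connector, which does not affect the $n^{\frac13-\varepsilon}$ inapproximability factor.

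Having thus established NP-hardness---indeed inapproximability within $n^{\frac13-\varepsilon}$ for every $\varepsilon>0$---already for the fixed value $\ell = 2$, the FPT statement follows immediately: an algorithm solving PBS on planar bundles exactly in time $f(\ell)\cdot n^{O(1)}$ would, specialized to $\ell = 2$, run in time $f(2)\cdot n^{O(1)} = n^{O(1)}$ and thus solve an NP-hard problem in polynomial time, contradicting $\mathrm{P}\neq\mathrm{NP}$; the same argument with an approximate oracle rules out an FPT-approximation scheme in~$\ell$. I expect everything except the geometric bookkeeping of the leftward detours for~$L_{\mathrm{eh}}$---checking that they truly stay inside the region freed up by the planarization and induce no shortcut with the outermost vertex gadget---to be a routine rewrite of the existing construction.
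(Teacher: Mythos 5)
Your overall strategy is exactly the paper's: concatenate all vertex gadgets into one polyline and all edge/neighborhood gadgets into a second one, argue the connectors create neither crossings nor new shortcuts, and then derive the FPT statement from hardness at the fixed parameter value $\ell=2$ (that last step is correct and standard). However, there is one concrete step in your routing plan that fails as stated: you propose to route the connectors of $L_{\mathrm{eh}}$ ``outside the horizontal extent spanned by the vertex gadgets, so that they avoid all vertical gadgets.'' For the \emph{edge} gadgets this is geometrically impossible, because an edge gadget for $uv$ begins and ends \emph{between} the vertex gadgets of $u$ and $v$, i.e., strictly inside the grid; any path from such an endpoint to the free region on the far left or right must cross every vertex gadget in between. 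So these crossings cannot be avoided by clever routing --- they must be \emph{handled}, and the missing idea is to reuse the crossing-point planarization mechanism a second time: the paper extends each edge gadget by two additional zigzag pieces (one before its first point, one after its last), built exactly like the outer zigzags of the neighborhood gadgets, which cross the intervening vertex gadgets only at inserted crossing points placed within distance $\eta$ of existing polyline points. Only after this extension do the edge gadgets' endpoints lie outside the vertex-gadget columns, where they can be chained together harmlessly. Your ``bend points farther than $\delta$ apart'' device does not substitute for this, since it controls shortcuts along the connector but not the planarity of the drawing.

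A second, smaller imprecision: for $L_{\mathrm{vtx}}$ you route all connectors ``below the entire arrangement,'' but if every vertex gadget is traversed in the same vertical direction, joining the end of one to the start of the next forces a return trip around the construction, again crossing the horizontal gadgets. The paper avoids this by alternating the connection pieces between the bottom and the top side (equivalently, reversing the orientation of every other vertex gadget), so each connector stays local. Neither issue changes the high-level argument or the quantitative analysis --- the extra zigzags add $O(\hat n^2)$ points per edge gadget, which is absorbed into the $n \le 30c\hat n^3$ bound and hence into the $n^{\frac13-\varepsilon}$ factor --- but both are needed for the two-polyline construction to actually be planar.
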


\section{Simplification of Polyline Tree Bundles}
\label{SEC:treebundles}
With the general PBS problem being hard to approximate better than $n^{\frac{1}{3}}$ even on planar inputs, we now consider tree bundles as another interesting special case of PBS. To form a tree bundle, the polylines have to start in a common root point and then branch out. 
\begin{definition}[Polyline Tree Bundle (PTB)]
An instance of PTB is a PBS instance $(P,\calL,\delta)$
where we additionally require that $\calL$ is a set of simple polylines such that all $L \in \calL$  start at the root point $p_r$, and for any pair of polylines $L, L' \in \calL$, the only intersection is a common prefix $L(p_r,p_i)=L'(p_r,p_i)$.
\end{definition}
We remark that this definition does not demand the tree bundle to be planar as the intersection constraint is only concerned with common points of the polylines. 
Moreover, we do not need to consider the case here where a polyline $L' \in \calL$ is a sub-polyline of $L \in \calL$. By definition, we will include the endpoints of all polylines in our simplification and hence if the endpoint of $L'$ lies on $L$, we could simply consider that point as the root of another PTB which can be simplified independently.  We will show that tree bundles can be consistently  simplified to optimality  in polynomial time. 
\smallskip\\\textbf{Problem Transformation.}
We transform the PTB simplification problem into a graph problem by constructing two directed graphs from the input data: a \emph{tree graph} and a \emph{shortcut graph}. We start by considering the polylines as embedded directed paths which start at the root point.
The tree graph  $G_\textnormal{t}=(V,E_\textnormal{t})$ is the union of these paths. 
More precisely, for each point occurring in the PTB there is a corresponding node $v \in V$ (with $v_r$ corresponding to the root point $p_r$), and there exists a directed edge $(v,w) \in E_\textnormal{t}$ if there is a polyline $L \in \calL$ which contains the  segment between the respective points (in that direction). For a given distance function $d$ and threshold $\delta > 0$,  the shortcut graph $G_\textnormal{s} = (V,E_\textnormal{s})$ is the union of all valid shortcut edges, i.e. edges $(v,w) \in \binom{V}{2}$ where for all  polylines $L \in \calL$ that contain $v$ and $w$ (in that order), we have $d((v,w),L(v,w)) \leq \delta$. Figure~\ref{fig:PTB} shows  $G_\textnormal{t}$ and $G_\textnormal{s}$ for an example PTB.  Note that no matter the distance function and the value of $\delta$, we always have  $E_\textnormal{t} \subseteq E_\textnormal{s}$, i.e. all tree graph edges are also contained in the shortcut graph.

\begin{figure}[b]
   \includegraphics[width=.45\textwidth]{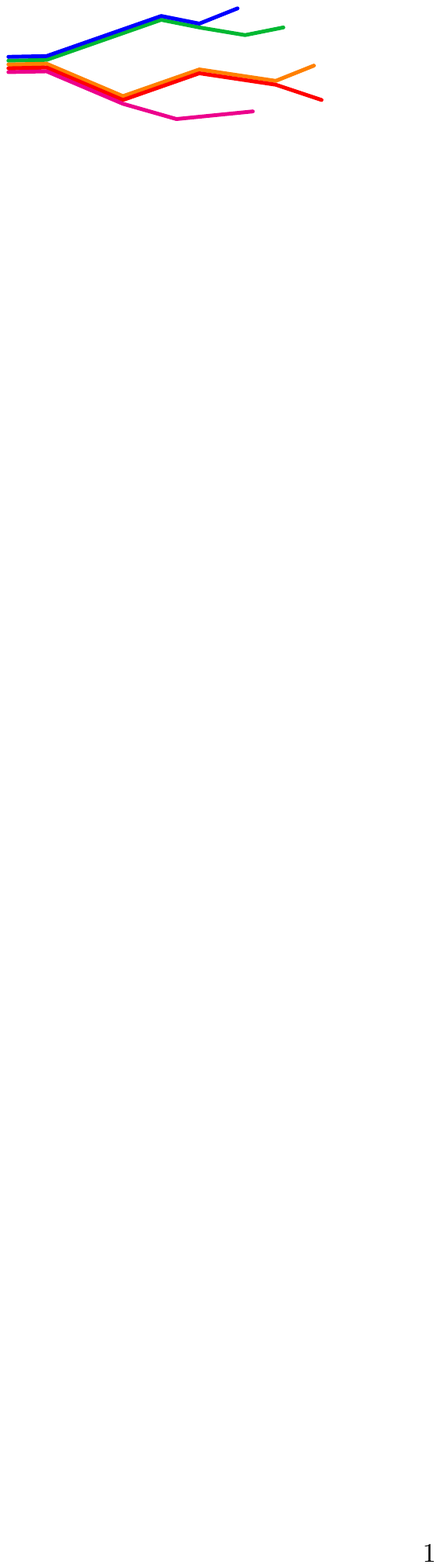}\hfill
   \includegraphics[width=.45\textwidth]{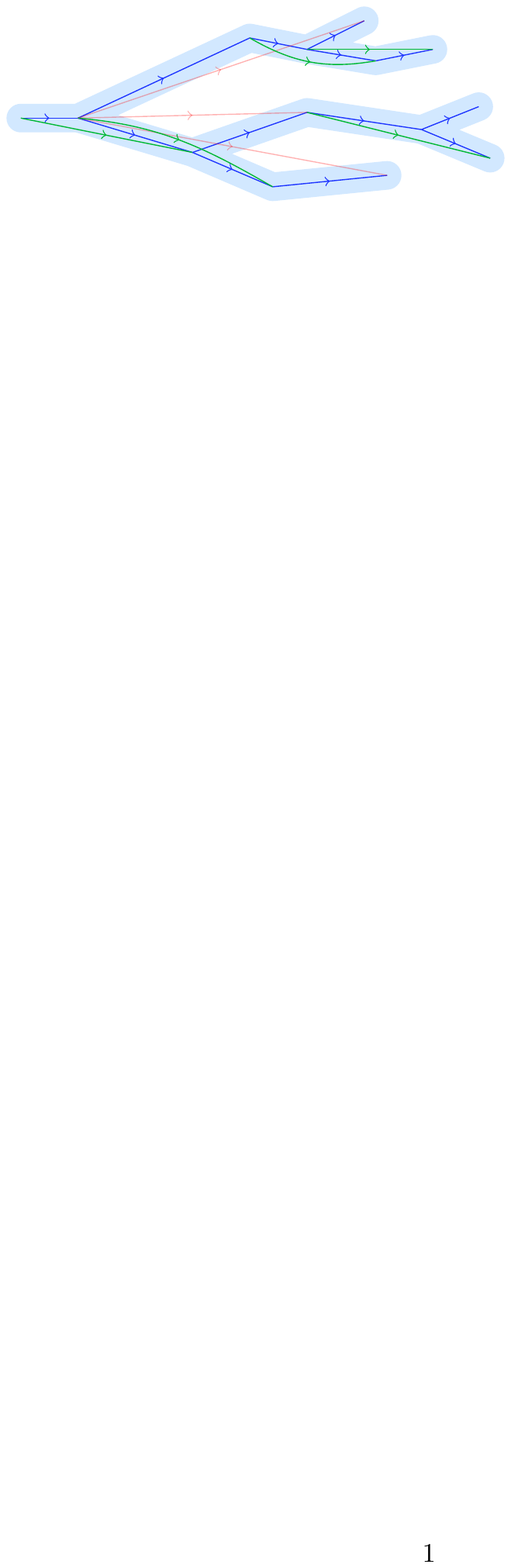}
    \caption{Left: Example of a PTB instance. Right: Blue edges represent the tree graph $G_\textnormal{t}$. The combination of the blue and green edges build the shortcut graph $G_\textnormal{s}$ for $d_H$ for the distance threshold $\delta$ (indicated via the light blue tubes). Examples of invalid shortcuts  are drawn in red.}
    \label{fig:PTB}
\end{figure}

\begin{restatable}[]{lemma}{lemTreeGraph}\label{LEM:treegraph}
The tree graph $G_\textnormal{t} = (V,E_\textnormal{t})$ has size $\mathcal{O}(n)$ and can be constructed in time~$\mathcal{O}(n^2)$.
\end{restatable}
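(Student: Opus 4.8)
The plan is to argue the size bound first, then the construction time. For the size: each node of $G_\textnormal{t}$ corresponds to a distinct point of $P$, so $|V| \le n$. For the edges, observe that $G_\textnormal{t}$ is the union of the directed paths obtained from the polylines in $\calL$, and by the PTB definition any two polylines $L, L'$ agree exactly on a common prefix $L(p_r,p_i)$ and are otherwise disjoint as point sets. Hence if we orient every polyline away from the root, each point $v \neq v_r$ has exactly one incoming edge in $G_\textnormal{t}$: the edge $(u,v)$ where $u$ is the predecessor of $v$ on any polyline through $v$ (all such polylines share this predecessor, since up to $v$ they coincide on a common prefix). Therefore $|E_\textnormal{t}| = |V| - 1 \le n - 1$, and $G_\textnormal{t}$ is literally a rooted tree (justifying the name). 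This gives the $\mathcal{O}(n)$ size bound.

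For the construction time: we may assume the input is given as $\ell$ lists of points from $P$. First identify $V$ with the set of points appearing in some $L \in \calL$; using a dictionary (hash map) on point coordinates this takes $\mathcal{O}(\sum_{L \in \calL} |L|)$ time. Then scan each polyline once, and for every consecutive pair $(p,q)$ in $L$ add the directed edge between the corresponding nodes, using the dictionary to suppress duplicates (an edge may be contributed by several polylines sharing that prefix segment). The total work is proportional to the total length $\sum_{L \in \calL} |L|$ of all input polylines. It remains to bound this quantity by $\mathcal{O}(n^2)$: since each $L_i$ is simple it has at most $n$ points, and there are at most $\ell$ polylines, so $\sum_{i} |L_i| \le \ell n$. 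If the paper intends $n$ to also dominate $\ell$ (which is reasonable, as $\ell \le $ number of leaves $\le n$ in a tree bundle, or one can simply assume $\ell \le n$), then $\ell n \le n^2$ and the bound follows; alternatively one notes $\sum_i |L_i| = \mathcal{O}(\ell n) = \mathcal{O}(n^2)$ directly under the standing assumption that the instance size is polynomial in $n$.

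The only mild subtlety — and the step I would be most careful about — is the bound on the total input length $\sum_{L\in\calL}|L_i|$, i.e. making explicit why it is $\mathcal{O}(n^2)$: this rests on $\ell \le n$, which for tree bundles follows because distinct polylines must have distinct endpoints (after the remark in the paper removing the case where one polyline is a sub-polyline of another), and all endpoints are among the $n$ points, so $\ell \le n$. Everything else (the hash-map manipulations, the single scan per polyline) is routine. I would state the $\ell \le n$ observation once and then the time bound $\mathcal{O}(\sum_i |L_i|) = \mathcal{O}(\ell n) = \mathcal{O}(n^2)$ is immediate.
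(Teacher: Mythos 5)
Your proposal is correct and follows essentially the same route as the paper: bound $\ell\le n$ via distinct endpoints (the paper phrases this as no polyline containing another), bound the total segment count by $\mathcal{O}(n^2)$, build $G_\textnormal{t}$ by a single pass over each polyline with duplicate suppression, and use the common-prefix property to conclude $|E_\textnormal{t}|=|V|-1=\mathcal{O}(n)$. Your explicit argument that every non-root node has a unique predecessor is a slightly more careful justification of the tree structure than the paper gives, but the substance is identical.
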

\begin{restatable}[]{theorem}{thmHausdorff}\label{THM:hausdorff}
The shortcut graph $G_\textnormal{s} = (V,E_\textnormal{s})$ has size $\mathcal{O}(n^2)$ and can be constructed for the Fr\'echet distance in time $\mathcal{O}(n^3)$ and for the Hausdorff distance in time $\mathcal{O}(n^2)$.
\end{restatable}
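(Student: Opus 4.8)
The size bound is immediate: $|V|\le n$, and since $E_\textnormal{s}\subseteq\binom{V}{2}$ there are at most $\mathcal{O}(n^2)$ shortcut edges, so $G_\textnormal{s}$ has size $\mathcal{O}(n^2)$. The substance is the construction time, and the plan is to reduce the validity test of each candidate shortcut to the single-polyline situation and then reuse the Imai--Iri~\cite{ima88} and Chan--Chin~\cite{cha96} machinery on the tree. The conceptual crux is a structural observation: if $v$ precedes $w$ on some polyline $L\in\calL$, then $v$ is a proper ancestor of $w$ in the tree graph $G_\textnormal{t}$, and since $G_\textnormal{t}$ is a tree the sub-polyline $L(v,w)$ is the \emph{unique} $v$--$w$ path $T(v,w)$ in $G_\textnormal{t}$; in particular this path is the same for every polyline that contains $v$ before $w$. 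Hence the constraint defining a valid shortcut $(v,w)$ collapses to the single test $d((v,w),T(v,w))\le\delta$, exactly as for one polyline, and only the $\mathcal{O}(n^2)$ ancestor--descendant pairs are non-trivial. By Lemma~\ref{LEM:treegraph} the tree $G_\textnormal{t}$ with its $\mathcal{O}(n)$ edges is available in time $\mathcal{O}(n^2)$, so the paths $T(v,w)$ needed below can be produced on the fly.

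For the Fr\'echet distance I would, for each vertex $v$, run a depth-first search through the subtree of descendants of $v$ in $G_\textnormal{t}$, keeping the current path $T(v,w)$ on a stack. At each visited descendant $w$ I compute $d_F((v,w),T(v,w))$ by the standard reachability propagation through the free-space diagram; since that diagram, for a segment against a path of $m$ edges, is a single strip of $m$ cells with both extreme corners free (because $T(v,w)$ runs from $v$ to $w$), this takes $\mathcal{O}(m)=\mathcal{O}(|T(v,w)|)\le\mathcal{O}(n)$ time. Over the at most $n$ descendants of $v$ this is $\mathcal{O}(n^2)$, hence $\mathcal{O}(n^3)$ in total, which matches the Imai--Iri bound for a single polyline.

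For the Hausdorff distance I would lift Chan and Chin's sweep algorithm from a path to the tree. Recall that it decides each directed shortcut $p\to q$ by sweeping from $p$ over the subsequent polyline points while maintaining a cone-based certificate, apexed at $p$, that is updated in $\mathcal{O}(1)$ per step and whose value depends only on the swept prefix; $q$ is a valid directed endpoint iff it passes the current test, and an undirected shortcut is Hausdorff-valid iff both of its directed versions are. On the tree, the forward sweep from $v$ becomes a depth-first search through $v$'s descendant subtree, pushing the current certificate on a stack before descending to a child and popping (restoring) it on backtracking (correct because the certificate of a path prefix is obtained from that of its parent prefix by one $\mathcal{O}(1)$ update); the backward sweep from $w$ becomes a walk up the unique ancestor path of $w$ while maintaining the certificate apexed at $w$ and testing each ancestor in turn. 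The forward pass costs $\sum_v\mathcal{O}(|\mathrm{subtree}(v)|)=\mathcal{O}(n^2)$ and the backward pass $\sum_w\mathcal{O}(\mathrm{depth}(w))=\mathcal{O}(n^2)$; finally each ancestor--descendant pair is classified by combining its two directed bits in $\mathcal{O}(1)$.

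I expect the main obstacle to be the faithful adaptation of Chan--Chin: one must check that their per-step state really is a function of the swept prefix alone and composes from the parent prefix's state in $\mathcal{O}(1)$ time, so that the stack-based push/pop over a \emph{branching} search (rather than a linear sweep) is legitimate, and that their ``both directed versions'' characterization of Hausdorff-valid shortcuts — a statement about one fixed segment and one fixed sub-polyline — transfers verbatim once the structural step has fixed $L(v,w)=T(v,w)$ for all relevant $L$. The structural observation itself is the key idea that makes everything go through, but it is technically light: it simply says that in a tree bundle every candidate shortcut sees a single, path-determined sub-polyline, so the bundle problem decomposes into $\mathcal{O}(n^2)$ single-polyline tests on shared paths.
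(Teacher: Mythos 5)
Your proposal is correct and follows essentially the same route as the paper: the $\mathcal{O}(n^2)$ size bound is immediate, the Fr\'echet bound comes from testing each of the $\mathcal{O}(n^2)$ ancestor--descendant pairs in $\mathcal{O}(n)$ time (the paper phrases this as deduplicating shared sub-polylines, which is exactly your observation that $L(v,w)$ equals the unique tree path $T(v,w)$ for every polyline containing $v$ before $w$), and the Hausdorff bound comes from adapting the Chan--Chin sweep to the tree with a backward sweep along each node's root path and a forward depth-first search over each descendant subtree in which the cone certificate is saved at branching points and restored on backtracking, just as in the paper's proof. Your explicit stack-based push/pop of the certificate is the same mechanism the paper describes as ``making a copy of the cone'' at each branching point.
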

The respective proofs are provided in the appendix. Based on the notion of the tree graph and the shortcut graph,
we are now ready to restate the PTB simplification problem (PTBS) as a graph problem.
\begin{definition}[Polyline Tree Bundle Simplification (PTBS)]
Given a tree graph $G_\textnormal{t} = (V,E_\textnormal{t})$ and a shortcut graph $G_\textnormal{s} = (V,E_\textnormal{s})$, the goal is to find a smallest node subset $S \subseteq V$ such that:
\begin{itemize}
\item The root node and all leaf nodes of the tree graph are contained in $S$.
\item The induced subgraph $G_\textnormal{s}[S]$ is connected.
\end{itemize}
\end{definition}
\vspace*{0.25cm}
\textbf{Exact Polytime Algorithm.}
Next, we describe a dynamic programming (DP) approach that only operates on $G_\textnormal{t}$ and  $G_\textnormal{s}$, and returns an optimal PTBS solution in time $\mathcal{O}(n^2)$. 
Let $Sub(v) \subseteq G_\textnormal{t}$ be the sub-tree rooted at node $v$ in the tree graph. Our main observation is that we can break down an optimal solution recursively. If a node $v$ is part of the solution, it's easy to see that there can't be shortcuts bypassing $v$. Thus, the solution $S$ can be split into two parts: an optimal solution for $Sub(v)$ and an optimal solution for $G_\textnormal{t} \setminus Sub(v)$.
We denote the size of an optimal solution for $Sub(v)$ by $s(v)$. As we don't know a priori which nodes will end up in the solution, we strive for computing  $s(v)$ for each node $v \in V$ in an efficient manner. For leaf nodes $v$, we obviously get $s(v)=1$. To compute $s(v)$ for an inner node $v$, we assume that $s(w)$ is already known for all nodes $w \in Sub(v)\setminus\{v\}$. Each path from a leaf $u$ to $v$ in $Sub(v)$ needs to contain a \emph{cover node} $w$  such that $(v,w) \in E_\textnormal{s}$  (that means there is a valid shortcut from $v$ to $w$). To identify the best selection of such  cover nodes, we compute a helping function $h : V \rightarrow \mathbb{N}$ for each node $w \in Sub(v)$ as follows:  Initially, $h(w)=s(w)$ if $(v,w) \in E_\textnormal{s}$, and $h(w) = \infty$ otherwise. Then, in a post-order traversal of $Sub(v)$, for each non-leaf node $w$ we set $h(w) = \min \{h(w), \sum_{u \in N(w)} h(u)\}$  where $N(w)$ denotes the set of children (out-neighbors) of $w$ in $G_\textnormal{t}$. In that way, $h(w)$ encodes the smallest number of nodes that have to be kept in $Sub(w)$ if for all paths from $v$ to leaf nodes in $Sub(w)$ the respective cover node is contained in $Sub(w)$. The optimal solution size $s(v)$ for $Sub(v)$ is then $h(v)+1$ (as we have to additionally include $v$ itself). Note that $s(v)$ is always well-defined (i.e., finite) as the tree edges are all valid shortcuts in $G_\textnormal{s}$. To make sure that at the time we compute $s(v)$ all values $s(w)$ for $w \in Sub(v) \setminus \{v\}$ are known, we also globally traverse the nodes in the tree graph in post-order.   Figure \ref{FIG:tree_DP} illustrates the computation of $s(v)$. The optimal set of simplification nodes can then be determined by backtracking. 
\begin{figure}[t]
\centering
\includegraphics[width=0.8\textwidth]{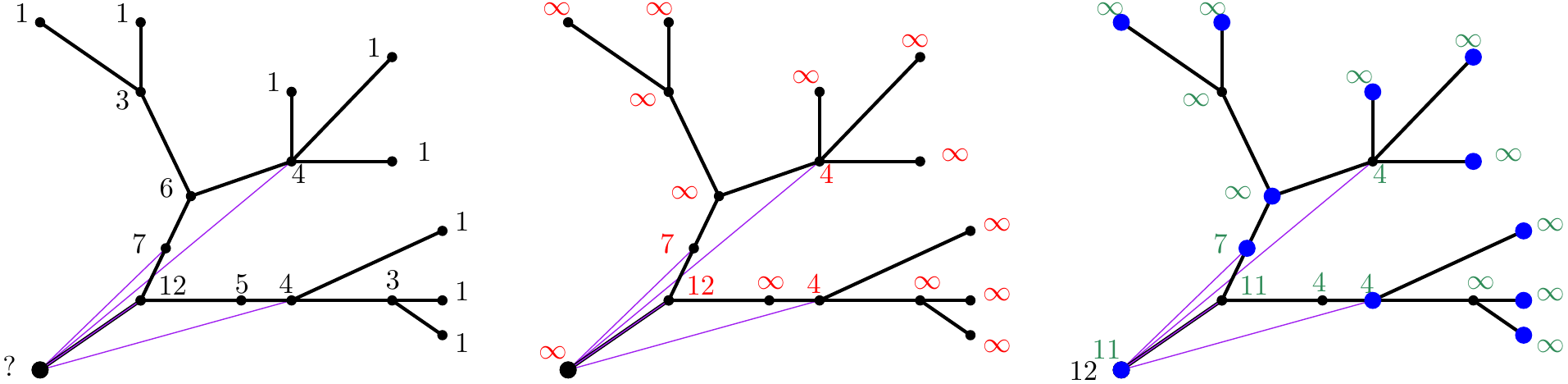}
\caption{The left image shows an example  tree graph with optimum sub-tree simplification sizes (black) known for all nodes except the root node. The purple line segments indicate valid shortcuts from the root node. The middle image depicts the same tree after initial assignment of the helping values (red). Here, only end nodes of valid shortcuts have finite values assigned to them. The right image shows the final helping values (green) after propagation as well as the respective optimum simplification size of the tree assigned to the root node (black). The blue marked nodes are the ones  that are contained in the optimal simplification.  }\label{FIG:tree_DP}
\end{figure}

For a faster running time of the DP in practice (used in our experiments), we only compute $h$-values for nodes in $Sub(v)$ which are on a path from $v$ to some node $w$ with $(v,w) \in E_\textnormal{s}$. These nodes can  easily be identified by computing the reverse path from each such node $w$ to $v$ and marking all nodes along the way (stopping as soon as a marked node is encountered to avoid redundancy). For marked nodes $w$ with an unmarked neighbor, we just set $\sum_{u \in N(w)} h(u)$ to $\infty$ to maintain correctness. Especially for small distance thresholds $\delta$ and large sub-trees $Sub(v)$, this modification accelerates the computation of $s(v)$ significantly.

\begin{restatable}[]{theorem}{thmCorrect}\label{THM:correct}
PTBS can be solved optimally in time $\mathcal{O}(n^2)$.
\end{restatable}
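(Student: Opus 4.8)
We split the argument into correctness of the dynamic program and its running time. For correctness we prove, by induction along the same post-order traversal of $G_\textnormal{t}$ that the algorithm uses, that the recursive computation described above indeed produces $s(v)$, i.e.\ the minimum size of a feasible simplification of the sub-bundle $Sub(v)$ (a set $S'\subseteq V(Sub(v))$ containing $v$ and all leaves of $Sub(v)$, with $v$ in the role of the root, such that $G_\textnormal{s}[S']$ is connected). Since $Sub(v_r)=G_\textnormal{t}$, this shows the algorithm returns the PTBS optimum, and the backtracking recovers a corresponding node set. The base case (leaves) is immediate with $s(v)=1$. Throughout we use the structural fact that every edge of $G_\textnormal{s}$ joins an ancestor--descendant pair of $G_\textnormal{t}$: a shortcut $(v,w)$ can only be valid if some polyline visits $v$ before $w$, which in a tree bundle forces $v$ to be an ancestor of $w$; combined with $E_\textnormal{t}\subseteq E_\textnormal{s}$ this implies, in particular, that every node is joined to each of its children in $G_\textnormal{s}$.

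For the inductive step we establish a decomposition lemma. Fix an inner node $v$ and a feasible simplification $S'$ of $Sub(v)$, and let $C\subseteq S'\cap\bigl(V(Sub(v))\setminus\{v\}\bigr)$ consist of those $w$ for which no node of $S'$ lies strictly between $v$ and $w$ on the tree path. Because $v\in S'$ and shortcuts respect ancestry, one verifies: (i)~$(v,w)\in E_\textnormal{s}$ for every $w\in C$; (ii)~$C$ is a \emph{cut}, meeting every path from $v$ to a leaf of $Sub(v)$ in exactly one node (hence $C$ is an antichain and the trees $Sub(w)$, $w\in C$, are disjoint); and (iii)~$S'\cap V(Sub(w))$ is itself a feasible simplification of $Sub(w)$. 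Conversely, for any cut $C$ whose nodes are all joined to $v$ by an edge of $G_\textnormal{s}$, taking $\{v\}$ together with an optimal simplification of each $Sub(w)$, $w\in C$, is feasible for $Sub(v)$. Using the induction hypothesis this gives
\[
  s(v)\;=\;1+\min_{C}\;\sum_{w\in C}s(w),
\]
the minimum ranging over all cuts $C$ of $Sub(v)$ whose nodes are all joined to $v$ in $G_\textnormal{s}$. As the set of children of $v$ is such a cut, the right-hand side is finite.

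It remains to see that the helper function realizes this minimum. By a second (post-order) induction inside $Sub(v)$ one shows that $h(w)$ equals $\min_{C'}\sum_{w'\in C'}s(w')$ taken over cuts $C'$ of $Sub(w)$ with $(v,w')\in E_\textnormal{s}$ for all $w'\in C'$: for leaves this is exactly the initialization, and for an inner $w$ a cut of $Sub(w)$ is either $\{w\}$ itself (admissible iff $(v,w)\in E_\textnormal{s}$, contributing the term $s(w)$) or decomposes into one cut per child subtree (contributing $\sum_{u\in N(w)}h(u)$) --- precisely the two candidates minimized in the update $h(w)=\min\{h(w),\sum_{u\in N(w)}h(u)\}$. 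Specializing to $w=v$ and adding $v$ itself yields $s(v)=h(v)+1$, matching the recurrence above; backtracking through the choices made in these two inductions reconstructs an optimal $S'$.

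For the running time, Lemma~\ref{LEM:treegraph} and Theorem~\ref{THM:hausdorff} give $|V|=\mathcal{O}(n)$, $|E_\textnormal{t}|=\mathcal{O}(n)$ and $|E_\textnormal{s}|=\mathcal{O}(n^2)$. Processing a node $v$ amounts to: traversing $Sub(v)$ to set the helper values to $\infty$; iterating over the out-neighbours of $v$ in $G_\textnormal{s}$ to overwrite $h(w)\leftarrow s(w)$ for those inside $Sub(v)$; one post-order pass over $Sub(v)$ doing the $\min$-update at each inner node; and resetting the touched values. With an adjacency structure for $G_\textnormal{s}$ that permits iterating over a node's out-neighbours, this costs $\mathcal{O}\bigl(|V(Sub(v))|+\deg^{+}_{G_\textnormal{s}}(v)\bigr)$ per node, so in total $\sum_v|V(Sub(v))|=\mathcal{O}(n^2)$ (already tight for a path-shaped $G_\textnormal{t}$) plus $\sum_v\deg^{+}_{G_\textnormal{s}}(v)=|E_\textnormal{s}|=\mathcal{O}(n^2)$; the final backtracking is dominated by this. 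The step I expect to be the main obstacle is the decomposition lemma, in particular item~(iii): one has to rule out that a shortcut reaching \emph{into} some $Sub(w)$ from a node strictly above $v$ lets a solution beat the clean per-subtree split. The key is again that shortcuts respect ancestry, so connectivity inside $Sub(w)$ cannot be routed through nodes outside $Sub(w)$ other than $v$; making this precise is the delicate part of the full proof.
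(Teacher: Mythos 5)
Your proposal is correct and follows essentially the same route as the paper: the same post-order dynamic program, the same induction on $s(v)$ using the fact that shortcuts respect ancestry so a retained node $v$ splits the instance cleanly into the subtrees below its cover nodes, the same characterization of the helper values $h$, and the same $\sum_v |Sub(v)| + |E_\textnormal{s}| = \mathcal{O}(n^2)$ accounting. You are merely more explicit than the paper about the decomposition/cut lemma and the exact meaning of $h(w)$ as a minimum over admissible cuts, which the paper asserts more informally.
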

The proof is given in the appendix.
%
Combining the time for problem transformation with  the time of the DP, we get an overall running time of $\mathcal{O}(n^3)$ for PTBS when using the Fr\'echet distance and a running time of $\mathcal{O}(n^2)$ when using the Hausdorff distance. Hence -- although having to use more complicated machinery -- we end up with running times for tree bundle simplification that match the best known running times for simplification of a single polyline.

\section{Tree Bundle Decompositions}\label{SEC:decomposition}
To leverage our algorithm for  optimal tree bundle simplification for general bundles, we next consider the problem of decomposing a general bundle into  (a small set of) tree bundles. To formalize the \textsc{Tree Bundle Decomposition} (TBD) problem we first introduce the notion of a $D$-decomposition of a polyline.
\begin{definition}[$D$-Decomposition]
Let $L=(s, \dots, t)$ be a simple polyline (represented as a list of points) and let $D$ be a point set. Further let $d_1, d_2, \dots, d_k$ be the points in $L  \cap D$ in the order in which they appear in $L$. The $D$-decomposition of $L$ denoted by $L(D)$ is the set of subpolylines $L(d_i,d_{i+1})$ for $i=1,\dots, k-1$.
\end{definition}
We strive to find a sensible  set $D$ that partitions a given bundle into tree bundles. \vspace{-8pt}%
\begin{definition}[Tree Bundle Decomposition (TBD)]
\label{def:tbd}
Given a PBS instance $(P,\calL,\delta)$,  we seek to find a point subset $D \subseteq P$ (the decomposition points) with the following requirements:
\begin{itemize}
\item Each polyline $L \in \calL$ starts and ends in a point in $D$. 
\item Let $G_I$ be the intersection graph in which we have a node for each subpolyline in $\bigcup_{L \in \calL} L(D)$ and an edge between two nodes if the subpolylines $\tilde{L}, \bar{L}$ share a point that is not in $D$, i.e. $(\tilde{L} \cap \bar{L}) \setminus D \neq \emptyset$. Then the subpolylines within a connected component in $G_I$ form a PTB.
\end{itemize}
\end{definition}
Based on a TBD, we can simplify the given  bundle by  simplifying each of the  tree bundles induced by $D$ independently. The union of all tree simplifications then yields $S$. The goal, of course, is still to end up with a small set $S$. To achieve that, we aim at TBDs which induce few but large tree bundles with a small decomposition set $D$. In the following we assume that all polyline endpoints are already included in $D$ as they have to be part of $S$ by definition.
\smallskip\\\textbf{A Simple Greedy Heuristic.}
Nodes in the set $D$ might end up being the root node of a tree bundle or a leaf node (or both). One way to construct $D$ is hence to greedily  select root nodes and grow trees from those (adding the respective leaf nodes to $D$ as well). 

For root selection, we use the line degree of the point, that is, the number of polylines in $\mathcal{L}$ that contain the point. As only polylines that contain the root  can be part of the respective PTB, we always choose  the node with the highest line degree that is not already part of a tree bundle next.
To compute the largest prossible tree bundle for a selected root node $r$, we first construct the union graph $G_U(V,E)$ of the polyline bundle.  Here,  each point in the bundle is represented by a node in $V$ and an edge exists between two nodes if there is a polyline segment  between the respective points. Additionally, we assign to each edge in $G_U$ the set of polylines that traverse it. A tree bundle can then be computed in a BFS-like fashion in $G_U$ starting from $r$, always pushing edges instead of nodes in the queue. The edges incident to $r$ are always included in the PTB and are hence used for initialization of the queue (artificially directed away from $r$). In any later step, if an edge $(u,v)$ is extracted from the queue, we first check whether all other edges incident to $v$ are unvisited. If that is the case, we need to make sure that the polyline set assigned to each incident edge is a (not necessarily proper) subset of the polylines assigned to $(u,v)$. If and only if those conditions are met for all incident edges, these edges are included in the subtree, marked as visited, and inserted in the queue. Otherwise $v$ is added to $D$. The process takes $\mathcal{O}(\ell\cdot n)$ time.

\section{Experimental Evaluation}\label{SEC:experiments}
We implemented the dynamic programming  approach (DP) for exact tree bundle simplification as well as  the greedy tree bundle decomposition algorithm (TBD)  in C++. Furthermore, we also provide the first implementation of the bi-criteria approximation algorithm (BCA) from \cite{spo20}. BCA demands to first compute a small \emph{star cover} of the polyline bundle where a star is a point $p$ in the bundle together with selected shortcuts that end in $p$. A feasible star cover has to ensure that for each polyline $L \in \mathcal{L}$ and for each segment in $L$, there is a  star in the  cover with a shortcut that bridges said segment. The set of points of all stars in the star cover induces a simplified polyline bundle  for a distance threshold of $2\delta$ (that is, twice the actual threshold).
The number of retained points is at most a factor of $\mathcal{O}(\log(\ell+n))$ larger than the optimal solution for threshold $\delta$.
The running time of BCA is in $\mathcal{O}(\ell \cdot n^3)$.
As this result only holds when using the Fr\'echet distance, we will focus in the experiments on  $d_F$. All experiments were run on a single core of an Intel Core i9 processor at 2.4 GHz.
\smallskip\\\textbf{Benchmark Data.}
We used two types of polyline bundle data to evaluate the algorithms: \textit{(i) Path bundles from embedded road networks} (extracted from OpenStreetMap \cite{osm}). Such bundles are a good model for movement data. Bundles were constructed by first extracting a connected subgraph with a given number of nodes from the network. To obtain a tree bundle,  we then  performed a  BFS run from a randomly selected root node in the subgraph and backtracked all paths from the leaves to the root of the BFS-tree.  For general bundles, we select not one but several root nodes in the subgraph, construct a tree bundle for each and then combine those into a single bundle.
\todo{COCOON reviewer 2 says: ``The data extracted for part (i) general bundles is still quite biased to be nicely decomposable into trees. This suits your algorithm, but I don't see an argument why this would be representative of real-world instances? And if so, certainly for very specific trajectories. It is no surprise that TBD+DP performs very well on this type of data.''}
\textit{(ii) Public transit networks} (GTFS data provided by OpenMobilityData \cite{omd}). We used the data from Stuttgart, Freiburg, Manhattan and Chicago. Here each bus or train line forms a polyline in our bundle.
\smallskip\\
\textbf{Tree Bundle Simplification Results.}
We compared the performance of DP and BCA on tree bundles of different sizes extracted from road networks. While it might seem to be an apples-to-oranges comparison, when we have an exact algorithm on the one side and a bicriteria approximation on the other, it is not a priori clear which algorithm would produce the smaller simplification when tested with the same $\delta$ (as BCA is allowed to exceed it by a factor of 2). We observe, however, that on all tested instances, the exact DP algorithm produces better simplification results than BCA, even though BCA is allowed to use a distance threshold of $2\delta$. If we call BCA with $\delta/2$ to then end up with a solution that obeys the $\delta$-constraint, the quality deteriorates significantly (with up to 50\% larger outputs).  Table \ref{TAB:treebundle-results} provides some selected results which reflect the general behavior. It is interesting  that the BCA algorithm indeed produces solutions where the $\delta$ threshold is violated by a factor of 2, proving the theoretical analysis to be tight in this respect. We also observe that the DP approach scales much better, with running times up to a factor of 50 faster than BCA on our largest test instance. 
\hyphenation{re-sul-ting}
\begin{table}[t]
\begin{minipage}{0.58\textwidth}
	\begin{tabular}{|l|r|r|r|r|r|r|}
	\hline
	 & $\delta \cdot 10^4$ & $\delta_F\cdot 10^4$ & $\delta_F/\delta$ & $n$ & $|S|$ & time \\
	\hline\hline
	DP  &  5.00  & 4.99 & 0.99 &  500 & \textbf{204} & 3\\
	\hline
	BCA & 5.00  & 9.81 & 1.96  &  500 &  {216} & 3 \\
	\hline
	BCA &   2.50 & 3.17 & 1.27  & 500 & {251} & 6\\
	\hline\hline	
	DP  & 5.00 & 5.00  & 1.00 &  8,000    &  \textbf{4009}& 21\\
	\hline
	BCA  & 5.00 & 8.77& 1.75 & 8,000 &  {4029}& 407\\
	\hline
	BCA  & 2.50& 4.04& 1.61 & 8,000&  {5276}& 350\\
	\hline\hline	
	DP  & 5.00 & 5.00  & 1.00  & 50,000   & \textbf{24,076} & 248 \\ 
	\hline
	BCA & 5.00 & 9.68& 1.94 & 50,000 & {24,195} & 14,800 \\
	\hline 
	BCA & 2.50 & 5.00& 2.00   & 50,000   & {32,457} & 13,500 \\
	\hline
	\end{tabular}
	\end{minipage}\hfill
	\begin{minipage}{0.4\textwidth}
		\caption{Comparison of DP and BCA  on tree bundles (note that BCA is tested for  the original $\delta$ and $\delta/2$). $\delta_F$ denotes the resulting Fréchet distance and $\delta_F/\delta$ the   distance relative to the threshold. $n$ is the input size, and $|S|$  the number of points in the computed solution  for threshold $\delta$ (in geo coordinates).  Timings are in milliseconds.}\label{TAB:treebundle-results}
	\end{minipage}
	\vspace*{-0.75cm}
\end{table}
\smallskip\\
\textbf{Results on General Bundles.} We used path bundles from road networks as well as public transit networks to evaluate the performance of TBD+DP and BCA.
Again, BCA results are allowed to exceed the distance threshold $\delta$ by a factor of 2. This slack is indeed strongly exploited also on public transit networks as confirmed by our detailed BCA experiments reported in the appendix. We now focus on a comparative evaluation. We observe that our heuristic approach of first computing a tree decomposition and then simplifying the resulting trees individually is always faster than BCA, computing results within a second even for roadnetwork bundles with around 10,000 nodes while BCA takes 30 times longer. In terms of quality, TBD+DP produce comparable or even better results than BCA on the Stuttgart and Freiburg network, and clearly superior results on road network bundles. Detailed results and illustrations  are provided in the appendix. The  instances on which TBD+DP was outperformed by BCA in terms of simplification size are bundles with large grid-like structures as the Chicago and the Manhattan public transit network. Here, our tree decomposition results in a huge set of trees of which we need to keep all root and leaf nodes in the simplification. A post-processing step in which for each point in $S$, we test whether it could be removed without constraint violation could help to close that gap. 

But especially for large instances, the simplicity and the fast computation  time of TBD+DP is a great advantage over BCA; in particular as the TBD is independent of $\delta$ and individual tree simplification can be easily parallelized for further improvement. 

\section{Future Work}
Based on our finding that the bi-criteria approximation algorithm  indeed exceeds the distance threshold bound by a factor of 2 on practical instances but produces high-quality solutions, future work could investigate whether improved bi-criteria approximation factors can be proven. Furthermore, it might be interesting to investigate the existence of FPT algorithms for PBS for suitable parameters. Our results that tree bundles can be processed in polynomial time might hint at parameterizability by e.g. the treewidth of the union graph of the polylines. On the practical side, further development of  heuristics for PBS or the consideration of non-simple polylines could be sensible avenues for future work.

\bibliographystyle{splncs04}
\bibliography{references}

\begin{thebibliography}{10}
\providecommand{\url}[1]{\texttt{#1}}
\providecommand{\urlprefix}{URL }
\providecommand{\doi}[1]{https://doi.org/#1}

\bibitem{ber08}
Bereg, S., Jiang, M., Wang, W., Yang, B., Zhu, B.: Simplifying {3D} polygonal
  chains under the discrete {F}r{\'e}chet distance. In: Proc.\ 8th Latin
  American Symposium on Theoretical Informatics (LATIN'08). pp. 630--641.
  Springer (2008). \doi{10.1007/978-3-540-78773-0\_54}

\bibitem{buc18}
Buchin, M., Kilgus, B., K{\"o}lzsch, A.: Group diagrams for representing
  trajectories. In: Proc.\ 11th ACM SIGSPATIAL Int.\ Workshop on Computational
  Transportation Science. pp. 1--10 (2018). \doi{10.1145/3283207.3283208}

\bibitem{cha96}
Chan, W.S., Chin, F.: Approximation of polygonal curves with minimum number of
  line segments or minimum error. Int.\ J.\ of Computational Geometry and
  Applications  \textbf{6}(1),  59--77 (1996). \doi{10.1142/s0218195996000058}

\bibitem{dou73}
Douglas, D.H., Peucker, T.K.: Algorithms for the reduction of the number of
  points required to represent a digitized line or its caricature.
  Cartographica  \textbf{10}(2),  112--122 (1973).
  \doi{10.3138/fm57-6770-u75u-7727}

\bibitem{fan15}
Fan, C., Filtser, O., Katz, M.J., Wylie, T., Zhu, B.: On the chain pair
  simplification problem. In: Proc.\ 14th Int.\ Symp.\ on Algorithms and Data
  Structures (WADS'15). pp. 351--362. Springer (2015).
  \doi{10.1007/978-3-319-21840-3\_29}

\bibitem{fan16}
Fan, C., Filtser, O., Katz, M.J., Zhu, B.: On the general chain pair
  simplification problem. In: 41st Int.\ Symp.\ on Mathematical Foundations of
  Computer Science (MFCS'16). Schloss Dagstuhl - Leibniz-Zentrum f{\"{u}}r
  Informatik (2016). \doi{10.4230/LIPIcs.MFCS.2016.37}

\bibitem{hal93}
Halld{\'{o}}rsson, M.M.: Approximating the minimum maximal independence number.
  Information Processing Letters  \textbf{46}(4),  169--172 (1993).
  \doi{10.1016/0020-0190(93)90022-2}

\bibitem{he18}
He, S., Bastani, F., Abbar, S., Alizadeh, M., Balakrishnan, H., Chawla, S.,
  Madden, S.: {RoadRunner}: improving the precision of road network inference
  from {GPS} trajectories. In: Proc.\ 26th ACM SIGSPATIAL Int.\ Conference on
  Advances in Geographic Information Systems. pp. 3--12 (2018).
  \doi{10.1145/3274895.3274974}

\bibitem{her92}
Hershberger, J., Snoeyink, J.: Speeding up the {Douglas-Peucker}
  line-simplification algorithm. In: Proc.\ 5th Int.\ Symp.\ on Spatial Data
  Handling (SDH'92). pp. 134--143 (1992)

\bibitem{ima88}
Imai, H., Iri, M.: Polygonal approximations of a curve—formulations and
  algorithms. In: Machine Intelligence and Pattern Recognition, vol.~6, pp.
  71--86. Elsevier (1988). \doi{10.1016/b978-0-444-70467-2.50011-4}

\bibitem{omd}
{MobilityData IO}: {OpenMobilityData}. \url{transitfeeds.com}

\bibitem{osm}
{OpenStreetMap contributors}: {Planet dump retrieved from \url{planet.osm.org}}
  (2017)

\bibitem{spo20}
Spoerhase, J., Storandt, S., Zink, J.: Simplification of polyline bundles. In:
  Proc.\ 17th Scandinavian Symposium and Workshops on Algorithm Theory
  (SWAT'20) (2020). \doi{10.4230/LIPIcs.SWAT.2020.35}

\bibitem{vis93}
Visvalingam, M., Whyatt, J.D.: Line generalisation by repeated elimination of
  points. The Cartographic Journal  \textbf{30}(1),  46--51 (1993).
  \doi{10.1179/000870493786962263}

\bibitem{wyl13}
Wylie, T., Zhu, B.: Protein chain pair simplification under the discrete
  {F}r{\'e}chet distance. IEEE/ACM transactions on computational biology and
  bioinformatics  \textbf{10}(6),  1372--1383 (2013).
  \doi{10.1109/tcbb.2013.17}

\end{thebibliography}

\newpage
\section{Appendix}
For clarity, we may refer to the points of $P$ also as \emph{polyline points}.
If a line segment between two polyline points or a simplified polyline has a segment-wise distance of at most $\delta$ to its original original counterpart,
we also call it a \emph{valid} shortcut or simplification, respectively.

\subsection{Approximation Hardness of Planar Polyline Bundles}
We now provide missing details for the reduction from MIDS to planar PBS from Section \ref{SEC:planarhard}.

We describe below in more detial how to re-shape the original gadgets from \cite{spo20} but provide a high-level description first:  We modify the vertex gadgets
such that the $y$-coordinate of the points before and after a shared
point in a vertex gadget is the same as for the upper and lower row of
points of the zigzag pieces in the edge and neighborhood gadgets.  We increase the
distance between each two vertex gadgets to stretch the the zigzag
pieces of the edge/neighborhood gadgets so that all crossings occur
close to polyline points in all affected polylines.  It is a crucial
property of the original construction~\cite{spo20} that stretching
edge and neighborhood gadgets horizontally does not change the
behavior in terms of possible shortcuts.  Onto our carefully arranged
crossing, we insert polyline points to ``planarize'' the construction;
see Figures~\ref{FIG:MIDS-reduction-crossing-gadget}.
We now analyze how close to existing polyline points these crossing points need to be placed.
We require them to be strictly inside a disk of radius $\eta$ to prevent the
emergence of new shortcuts.
Intuitively, $\eta$ is chosen sufficiently small to ensure that,
given any non-shortcut $(p, q)$,
moving $p$ or $q$ within a disk of radius $\eta$ does not bring $(p, q)$ into
the $\delta$-neighborhood disk of some third polyline point $o$. More formally, we
let
\begin{equation*}
\eta = \left( \min_{\{p,o,q\} \subseteq L, L \in \calL} \{d((p, q), o) \mid d((p, q), o) > \delta\} \right) - \delta \, .
\end{equation*}
Clearly, we can determine~$\eta$ in polynomial time.
By Lemma~\ref{CLM:no-new-shortcuts-by-crossing-points},
we show that the new crossing points do not allow new shortcuts
and, hence, the functionality of the gadgets is not affected regardless
of whether we keep the crossing point and skip the neighboring points,
which we call its \emph{skip points},
or the other way around.
Observe that skip points are non-shared points.

\begin{lemma}
	\label{CLM:no-new-shortcuts-by-crossing-points}
	The set of endpoints of all shortcuts starting at a crossing point~$p$ is
	contained in the set of endpoints of all shortcuts starting
	at one of the two skip points of~$p$.
\end{lemma}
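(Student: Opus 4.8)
The plan is to reduce the claim to a statement about a single polyline and then to exploit precisely the slack built into the choice of~$\eta$. First I would record the geometry: a crossing point~$p$ lies on exactly two polylines --- the vertex-gadget polyline~$L_1$ and the edge- or neighborhood-gadget polyline~$L_2$ whose crossing was planarized --- it is inserted on an original edge of each of $L_1,L_2$, and its two skip points $q_1\in L_1$, $q_2\in L_2$ are the endpoints of those edges that lie strictly inside the $\eta$-disk of~$p$; thus $|p-q_i|<\eta$, and each~$q_i$ is non-shared, so~$L_i$ is the only polyline through~$q_i$. I would also invoke the characterization that, for the gadget polylines, a shortcut $\{u,w\}$ of a polyline~$L$ is valid --- under both the Hausdorff and the Fr\'echet distance --- iff every intermediate vertex~$o$ of $L(u,w)$ satisfies $\mathrm{dist}(o,\overline{uw})\le\delta$ (for the gadget subpolylines, which are traversed monotonically along their chords, this is a feature of the original construction~\cite{spo20} that survives the reshaping and the subdivision of edges by crossing points). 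Since the only polylines through~$p$ are $L_1$ and~$L_2$, a valid shortcut $\{p,x\}$ is valid with respect to every~$L_i$ containing~$x$, and because~$q_i$ is non-shared it would then suffice to show: if $x\in L_i$ and $\{p,x\}$ is valid with respect to~$L_i$, then $\{q_i,x\}$ is valid with respect to~$L_i$ --- hence fully valid --- so that~$x$ is an endpoint of a shortcut at the skip point~$q_i$. (If~$x$ lies on neither $L_1$ nor~$L_2$, then $\{q_1,x\}$ is vacuously a shortcut and there is nothing to do.)

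For the single-polyline step I would fix~$L$, a valid shortcut $\{p,x\}$ of~$L$, and the skip point~$q$ of~$p$ on~$L$; on~$L$ the point~$p$ has neighbors~$q$ and~$q'$ appearing in the order $q,p,q'$, and~$x$ lies either beyond~$q'$ or before~$q$. In either case every vertex of $L(q,x)$ is either~$q$ itself, with $\mathrm{dist}(q,\overline{qx})=0\le\delta$, or a vertex of $L(p,x)$. For such a vertex~$o$ one has $\{o,q,x\}\subseteq L$, validity of $\{p,x\}$ yields $\mathrm{dist}(o,\overline{px})\le\delta$, and since the chords $\overline{qx}$ and $\overline{px}$ share the endpoint~$x$ and their other endpoints are less than~$\eta$ apart (so that $d_H(\overline{qx},\overline{px})\le|p-q|<\eta$), it follows that
\[
\mathrm{dist}(o,\overline{qx})\ \le\ \mathrm{dist}(o,\overline{px})+|p-q|\ <\ \delta+\eta .
\]
The key point is then the $\eta$-gap: if $\mathrm{dist}(o,\overline{qx})$ exceeded~$\delta$, the definition of~$\eta$ applied to the triple $\{o,q,x\}$ would force it to be at least $\delta+\eta$, contradicting the displayed strict inequality. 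Hence $\mathrm{dist}(o,\overline{qx})\le\delta$ for every vertex of $L(q,x)$, so by the characterization $\{q,x\}$ is valid.

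A handful of degenerate endpoints remain, to be dealt with directly: if $x=q'$, then $\{q,q'\}$ is a valid shortcut of~$L$ (its subpolyline $\overline{qp}\cup\overline{pq'}$ lies on $\overline{qq'}$) with endpoint~$x$; if~$x$ already equals a skip point of~$p$ there is nothing to show; and if a crossing is inserted at an original polyline vertex, so that~$p$ has more than two incident edges, the argument above is run on each polyline through~$p$ in turn. I expect the main obstacle to lie in the two ingredients the argument imports: establishing --- and carrying through the reshaping --- the vertex characterization of validity for both the Hausdorff and the Fr\'echet distance, so that invalidity of $\{q,x\}$ is always certified by a single vertex~$o$ far from~$\overline{qx}$; and guaranteeing, by the layout of the crossings, that the only polyline vertices within distance~$\eta$ of~$p$ are~$p$ and its skip points, which is exactly what lets such a witness pass through the $\eta$-perturbation and reappear --- via the $\eta$-gap --- as a witness that $\{p,x\}$ is invalid.
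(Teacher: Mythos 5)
Your proof is correct and follows essentially the same route as the paper's: both arguments rest on the fact that invalidity of a shortcut is witnessed by a single intermediate vertex far from the chord, on the $\eta$-gap built into the definition of $\eta$, and on the observation that the chords $\overline{px}$ and $\overline{q_ix}$ differ in Hausdorff distance by less than $\eta$, combined via the triangle inequality. The only difference is presentational — the paper phrases it as a contradiction (assume $(p,q)$ is a shortcut but neither skip-point chord is, extract the far witness $o$, and derive $\delta+\eta \le d((s_1,q),o) < \delta+\eta$), whereas you argue directly and spell out a few degenerate cases — but the mathematical content is the same.
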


\begin{proof}	
	We prove this by contradiction.
	Let $s_1$ and $s_2$ be the two skip points of $p$.
	Suppose there is a point $q$ such that the line segment $(p, q)$ is a shortcut,
	whereas	$(s_1, q)$ and $(s_2, q)$ are no shortcuts.
	W.l.o.g., let $p, q, s_1$ be points of a polyline $L$.
	Since the Hausdorff distance is a lower bound for the Fr\'echet distance,
	we know that also $d_H((p, q), L(p, q)) \leq \delta$.
	
	For all of the gadgets, it has been shown that, wherever there is no shortcut
	between two points $p$ and $q$,
	this is because some point~$o$ between~$p$ and~$q$ has Euclidean distance
	greater than~$\delta$ to~$(p, q)$~\cite{spo20}.
	Hence in our case and by the choice of~$\eta$,
	$d((s_1, q), o) \ge \delta + \eta$.
	
	By the choice of~$p$, we know that $d_H((p, q), (s_1, q)) < \eta$.
	For any point $o$ on $L(p, q)$,
	this implies, by using the triangle inequality,
	$d((s_1, q), o) < \delta + \eta$. A contradiction.
\end{proof}

A difference remaining is that we may save a single polyline point if
we take the crossing point instead of its two skip points.  This
however, does not affect the inapproximability bound as we argue next.
Let~$\mathsf{OPT^p}$ denote the size of an optimal solution of the
constructed planar PBS instance. Moreover, let~$\mathsf{OPT^n}$ denote
the size of an optimal \emph{well-formed} solution where we do
\emph{not} allow polyline points to be placed onto the crossings. (Think
of this, as the optimum solution in the instance prior to the
planarization step where we introduce the crossing points.) The
optimum solution for the planarized instance can be turned into a
well-formed solution by replacing any crossing point with the two
nearby skip points giving $\mathsf{OPT^n} \leq 2\cdot \mathsf{OPT^p}$.
Since any well-formed solution is in particular a feasible solution to
the planarized instance, we clearly have
$\mathsf{OPT^p} \leq \mathsf{OPT^n}$.  As in the original analysis in the
unplanarized setting we can infer that there is a threshold~$T$ such that if the
input graph $G$ of  independent dominating set is a yes-instance then $\mathsf{OPT^n}\leq T $. On the other hand, if the graph is a
no-instance then $\mathsf{OPT^n} > n^{\frac{1}{3} - \varepsilon} \cdot T$.  For
the planarized case it follows that if $G$ is a yes-instance,
$\mathsf{OPT^p}\leq \mathsf{OPT^n}\leq T$, and if $G$ is a no-instance then
$\mathsf{OPT^p}\geq \mathsf{OPT^n}/2 > n^{\frac{1}{3} - \varepsilon}/2 \cdot T$.  For
increasing $n$, this gives the same inapproximability gap as for the
non-planar case.

Moreover, in the whole polyline bundle, we have slightly more points in the vertex gadgets and in the neighborhood gadgets and for the crossings than in the non-planar construction.
In the old non-planar reduction, the number of points was upper bounded by $n \leq 10 c \hat{n}^3$.
Being a bit more generous, we can upper bound the number of polyline points in the planar construction by $n \leq 30 c \hat{n}^3$; see Lemma~\ref{CLM:mids-reduction-size} in the appendix.
In any case, this constant 10 or 30 is dominated by the $\varepsilon$,
which keeps the rest of the analysis~\cite{spo20} sound
and, hence, we conclude the correctness of Theorem~\ref{CLM:approx-hardness}.

\subsubsection{Modifications in the Gadgets of the Hardness Reduction.\smallskip\\}
\noindent\hspace{-6pt}\textbf{Vertex Gadget;} see Figure~\ref{FIG:MIDS-reduction-vertex-gadget-more-details}.
Compared to the vertex gadget in the non-planar case~\cite{spo20},
our vertex gadget has more points in the zigzag piece.
More precisely, only each third point is a shared point (instead of each second)
and there are potentially more shared points.
This is because we won't have any two edge gadgets or neighborhood gadgets on the same height.
Namely, there are $\hat{n} (c + 1)$ shared points instead of $\hat{n}$ shared points,
where $c = |E| / \hat{n}$ is some constant.
We will show in Lemma~\ref{CLM:mids-reduction-size}
that the total number of points is still sufficiently small.

Moreover, the vertical distance between each two points in a vertex gadget is reduced such that these neighboring points are on the same height as the upper/lower row of points in the zigzags of the edge and neighborhood gadgets, respectively.

Clearly, the line segment~$m$ from the first to the last point has still Fr\'echet distance at most $\delta$ to the whole vertex gadget and is hence a shortcut.
There is still no other shortcut.
Any other potential shortcut segment would cross~$m$ at most once.
Let this crossing be~$r$.
\begin{figure}
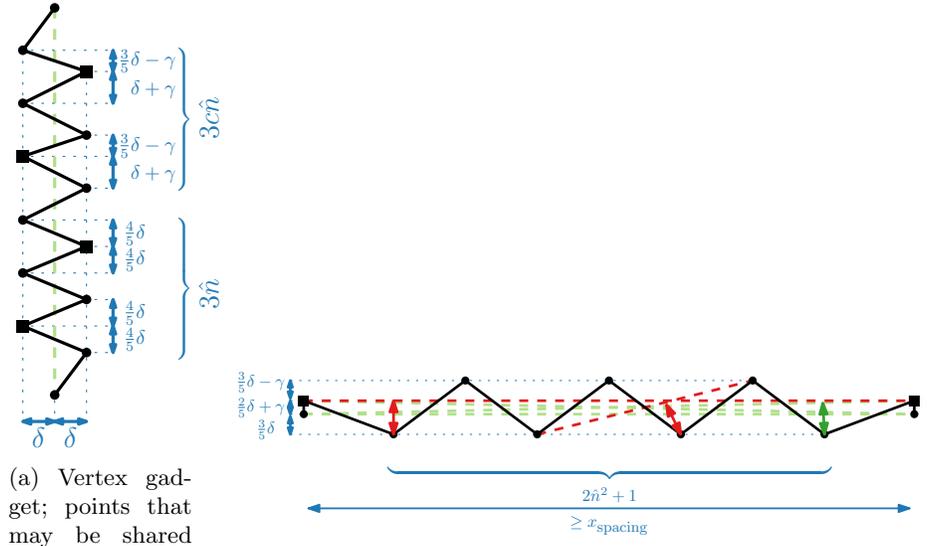
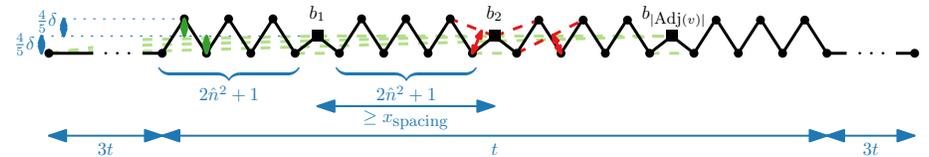

	\centering
	\begin{subfigure}[b]{0.2 \linewidth}
		\centering
		\includegraphics[page=2,trim=274 140 770 0,clip]{mids-reduction}
		\caption{Vertex gadget; points that may be shared with edge or neigh\-borhood gad\-gets are drawn as squares}
		\label{FIG:MIDS-reduction-vertex-gadget-more-details}
	\end{subfigure}
	\hfill
	\begin{subfigure}[b]{.76 \linewidth}
		\centering
		\includegraphics[page=4,width=\linewidth,trim=270 222 544 0,clip]{mids-reduction}
		\caption{Edge gadget for an edge $u v$; the second and second last point (drawn as squares) are shared with the vertex gadgets of $u$ and $v$, respectively. If and only if at least one of the two shared points is skipped, we can skip all $2\hat{n}^2 + 1$ inner points.}
		\label{FIG:MIDS-reduction-edge-gadget-more-details}
	\end{subfigure}
	
	\bigskip
	
	\begin{subfigure}[b]{1 \linewidth}
		\centering
		\includegraphics[page=12,width=\linewidth,trim=322 0 430 245,clip]{mids-reduction}
		\caption{Neighborhood gadget for a vertex $v$; the points drawn as squares are shared with the vertex gadgets of~$v$ and $v$'s neighbors in the graph. Only if we keep at least one of the shared points, we can skip almost all points of the gadget.}
		\label{FIG:MIDS-reduction-neighborhood-gadget-more-details}
	\end{subfigure}

	\caption{More details on the gadgets in the planar hardness reduction. A green dashed segments indicates that there is a shortcut between its endpoints, whereas a red dashed segment indicates that there is no shortcut.}
	\label{FIG:MIDS-reduction-more details}
\end{figure}

For the Fr\'echet distance, we can assign points to the left of~$s$ only to the part of the shortcut segment above~$r$ and points to the right side of~$s$ only to the part of the shortcut segment below~$r$~-- or both the other way round.
In any case, since the zigzag piece jumps between the left and right side of~$m$ and no two points of a vertex gadget have the same $y$-coordinate,
there cannot be another shortcut.
For the Hausdorff distance, we can choose each third $y$-distance between two consecutive points in the zigzag sufficiently large as we do not need to obey the vertical gaps within the edge and neighborhood gadgets
(i.e. $(3/5 \delta - \gamma, \delta + \gamma)$ and $(4/5 \delta, 4/5 \delta)$, respectively) there.

The smaller $y$-distances in the zigzags also cause flatter pockets at the vertex gadgets.
However, for the minimum horizontal distance between each two vertex gadgets,
we use at least the same value~$x_\textrm{spacing}$ as before~\cite{spo20},
which is so large that the distance between each two points of a zigzag piece in
an edge or neighborhood gadget is at least~$3 \delta$.
As the width of the vertex gadget is just $2 \delta$, the zigzag pieces of the edge/neighborhood gadgets do not cross
other parts of the zigzag of a vertex gadget when leaving these pockets.

\bigskip

\noindent \textbf{Edge Gadget;} see Figure~\ref{FIG:MIDS-reduction-edge-gadget-more-details}.
The edge gadget is precisely the same as in the non-planar case, including the parameter~$\gamma$.
So we can assume that it still works as desired,
i.e. we can take a long shortcut if and only if we keep at most one of the two shared points.
This enforces the independent set property.

\bigskip

\noindent \textbf{Neighborhood Gadget;} see Figure~\ref{FIG:MIDS-reduction-neighborhood-gadget-more-details}.
The neighborhood gadget is essentially the same as in the non-planar version.
Let the corresponding vertices in horizontal order of the vertex gadgets be $u_1, \dots, u_{|\textrm{Adj}(v)|}$
and the shared points be $b_1$ and $b_{|\textrm{Adj}(v)|}$, respectively.
Between each two $b_i$ and $b_{i + 1}$ (i $\in \{1, \dots, |\textrm{Adj}(v)|-1\}$), we still add a zigzag with
$2 \hat{n}^2 + 1$ points.
Different from the non-planar construction,
we have another zigzag piece with $2 \hat{n}^2 + 1$ points before~$b_1$
and after~$b_{|\textrm{Adj}(v)|}$.
We use these additional two zigzags to cross over the vertex gadgets on the left of~$b_1$ and on the right of~$b_{|\textrm{Adj}(v)|}$.
As in the non-planar case, we define $t$ as the distance between the first point of the first zigzag piece and the last point of the last zigzag piece.
We also add a first point $3 t$ to the left of the first zigzag piece.
Symmetrically, we add a last point $3 t$ to the right of the last zigzag piece.

The claim that the only shortcuts are (i) the shortcuts skipping only $b_i$ for $i \in \{1, \dots,|\textrm{Adj}(v)|\}$ and (ii) the shortcuts starting at the first point or $b_i$ with $i \in \{1, \dots, |\textrm{Adj}(v)|\}$ and ending at the last point or $b_j$ with $i < j \in \{1, \dots, |\textrm{Adj}(v)|\}$~-- except for the shortcut starting at the first and ending at the last point~--
still holds true even though we have two additional zigzag-pieces before and after our shared points.
The critical part of the proof of this claim is that there is a shortcut from the first or last point to some $b_i$ for $i \in \{1, \dots, |\textrm{Adj}(v)|\}$.
Observe that the key argument remains true, i.e.,
the potential shortcut segment from, say, $b_1$ to the last point has a $y$-distance of
at most~$\delta$ to the upper row of points since we have chosen the $x$-distance of the last and the second last point to be $3 t$, which is sufficiently large.

Consequently, we can skip almost all points in a neighborhood gadget if we keep at least one point of $b_1, \dots, b_{|\textrm{Adj}(v)|}$.
If we skip all of them, we can skip no other point. 
So, to avoid high costs, we must not take the shortcut of the vertex gadget of at least one vertex of~$\textrm{Adj}(v)$.
This enforces the dominating set property.

By the following lemma, we show that the number of points in the new planar instance is bounded by $n \leq 30 c \hat{n}^3$.

\begin{lemma}
	\label{CLM:mids-reduction-size}
	By our reduction, we obtain from an instance $G = (V, E)$ of minimum independent dominating set an~instance of PBS with a planar polyline bundle as input that has $n \leq 30 c \hat{n}^3$ points, where $\hat{n} = |V| \geq 2$, $|E| = c \hat{n}$ ($c \geq 1$ is~constant).
\end{lemma}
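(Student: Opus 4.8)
The plan is to bound $n$ by summing the number of polyline points contributed by each type of gadget, plus the number of newly inserted crossing points, and then to absorb everything into the deliberately loose constant $30$ using $\hat{n} \ge 2$ and $c \ge 1$. Throughout I would count shared points and crossing points with multiplicity (once per gadget in which they occur); since this only over-counts, the result is still an upper bound on the number of \emph{distinct} points $n$. The target $30 c \hat{n}^3$ (versus the $10 c \hat{n}^3$ of the non-planar reduction~\cite{spo20}) is precisely the slack needed to swallow the extra shared points we now put into vertex gadgets, the two extra zigzag pieces per neighborhood gadget, and all the crossing points.

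First I would handle the \textbf{vertex gadgets}: there are $\hat{n}$ of them, only every third zigzag point is shared, and the total number of shared points over all vertex gadgets is $\hat{n}(c+1)$, so together with the $O(1)$ remaining points per gadget the vertex gadgets contribute only $O(c\hat{n})$ points, a lower-order term. Next, the \textbf{edge gadgets}: there are $c\hat{n}$ of them and each has $2\hat{n}^2 + 1$ inner points plus its two shared and two end points, i.e.\ $O(\hat{n}^2)$ points, for a total of $O(c\hat{n}^3)$. Then the \textbf{neighborhood gadgets}: one per vertex $v$, built from $\deg(v)+1$ zigzag pieces of $2\hat{n}^2+1$ points each (the $\deg(v)-1$ pieces between consecutive shared points together with the two new pieces before the first and after the last shared point), the $\deg(v)$ shared points, and the two extra endpoint points; summing over all $v$ and using $\sum_v \deg(v) = 2|E| = 2c\hat{n}$ gives $(2\hat{n}^2+1)(2c+1)\hat{n} + O(c\hat{n}) = O(c\hat{n}^3)$. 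Finally, the \textbf{crossing points}: exactly one new point is inserted per crossing of a vertex gadget with an unrelated edge or neighborhood gadget; since two horizontal gadgets never cross and two vertex gadgets never cross, the crossings are in bijection with (vertex gadget, horizontal gadget) incidences, of which there are at most $\hat{n} \cdot (c+1)\hat{n}$, and each such pair produces only $O(1)$ crossings because $x_\textrm{spacing}$ is chosen so that consecutive zigzag points of an edge/neighborhood gadget are at least $3\delta$ apart while a vertex gadget is only $2\delta$ wide. Hence the crossing points contribute $O(c\hat{n}^2)$. Adding the four contributions and bounding the linear and quadratic terms by $\hat{n}^3$ (valid since $\hat{n} \ge 2$) yields $n \le 30 c \hat{n}^3$ with room to spare.

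I expect the main obstacle to be the crossing-point bound: it is the one step that genuinely uses the geometry of the rearranged construction rather than arithmetic. One must argue that $x_\textrm{spacing}$, although it has to grow with $\hat{n}$ so that every zigzag segment stays at least $3\delta$ long, nevertheless keeps the number of crossings between a fixed vertex gadget and a fixed horizontal gadget bounded by a constant, so that crossings add only a quadratic rather than a cubic term; this relies on the fact (already used in the non-planar construction) that stretching edge and neighborhood gadgets horizontally does not change their shortcut behavior. Everything else is a routine summation over the $\hat{n}$ vertex gadgets, the $c\hat{n}$ edge gadgets, and the $\hat{n}$ neighborhood gadgets.
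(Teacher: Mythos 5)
Your proposal is correct and follows essentially the same route as the paper: a gadget-by-gadget summation (edge and neighborhood gadgets each contributing $\Theta(c\hat{n}^3)$, everything else lower order, one crossing point per vertex-gadget/horizontal-gadget pair) with the slack between the leading constant and $30$ absorbing all lower-order terms via $\hat{n}\ge 2$, $c\ge 1$. One small slip: the $\hat{n}(c+1)$ shared points in the vertex gadgets is the count \emph{per gadget} (one per horizontal gadget row), so all $\hat{n}$ vertex gadgets together contribute $\Theta(c\hat{n}^2)$ points rather than $O(c\hat{n})$; since this is still dominated by the cubic terms, the final bound is unaffected.
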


\begin{proof}
	We assume that we have a shared point for each pair of vertex gadget and edge gadget, and for each pair of vertex gadget and neighborhood gadget.
	This is either a shared point because the corresponding vertex is incident to the corresponding edge/neighborhood or it is a point inserted to overcome a crossing.
	To count the points of the vertex, edge, and neighborhood gadgets, we charge the shared points to the vertex gadgets.
	So in total we have
	$\hat{n} \cdot (\hat{m} + \hat{n})$
	shared points.
	All vertex gadgets together have
	$\hat{n} (3 c \hat{n} + 3 \hat{n} + 2)$
	points, all edge gadgets have
	$\hat{m} (2\hat{n}^2 + 3)$
	unshared points, and all neighborhood gadgets have
	$(2 \hat{m} + 2) \cdot  (2\hat{n}^2 + 1) + 2 \hat{n}$
	unshared points.
	Summing these values up and using $\hat{m} = |E| = c \hat{n}$ yields (for $\hat{n} \ge 2$)
	\begin{align}
	n &\le \hat{n} \cdot (\hat{m} + \hat{n}) + \hat{n} (3 c \hat{n} + 3 \hat{n} + 2) + \hat{m} (2\hat{n}^2 + 3) + (2 \hat{m} + 2) \cdot  (2\hat{n}^2 + 1) + 2 \hat{n} \nonumber\\
	&= 6c \hat{n}^3 + (4c + 8)\hat{n}^2 + (5c + 4) \hat{n} + 2
	\leq 30 c \hat{n}^3 \, .
	\end{align}
\end{proof}

\subsubsection{Connecting the Gadgets to Two Polylines.}
At this point, we use one polyline per gadget.
So, our reduction uses $(2 + c) \hat{n}$ polylines.
We can reduce the number of polylines to two
by connecting all vertex gadgets from left to right in a row
(alternating the connection pieces between bottom and top side),
which gives us the first polyline, and by connecting all edge and neighborhood gadgets similarly,
which gives us the second polyline.
For the latter, it is a bit more subtle do this without creating new crossings or shortcuts.
The neighborhood gadgets are already relatively long and we can simply connect their endpoints, which are far away from the vertex gadgets,
without creating new shortcuts.
The edge gadgets, however, have their start and end points in between vertex gadgets.
The solution is to extend them to reach to the left and the right of all vertex gadgets similar to the neighborhood gadgets.
There, we can connect them without creating new crossings or shortcuts.
As for the neighborhood gadgets, we can do this by adding two additional zigzag pieces~--
one before the first and one after the last point of the edge gadget,
which cross all vertex gadgets to the left and the right in the way we describe above.
Observe that this also does not affect the approximation ratio asymptotically.

\newpage
\subsection{Exact Algorithm for Polyline Tree Bundle Simplification}
Next, we provide the proofs for the correctness and the running time of he dynamic programming approach for optimal tree bundle simplification. The respective lemmata and theorems are restated here for easier comprehension.

\begingroup
\def\thelemma{\ref{LEM:treegraph}}
\lemTreeGraph*
\endgroup
\begin{proof}
In a PTB on $n$ points, there can be at most $n$ polylines (based on the observation that no polyline fully contains another) with a total of $\mathcal{O}(n^2)$ segments (based on all polylines being simple). The union of the directed paths induced by the polylines can then be constructed by considering the polylines one-by-one. For a polyline, we start at its end point and then add the edges that represent the respective polyline segments backwards one after the other until we either reach the root node or a node which already was considered as a segment endpoint in another polyline (then clearly the remaining path is already part of $G_\textnormal{t}$). This process takes  $\mathcal{O}(n^2)$ time and, given the tree structure, leads to an edge set $E_\textnormal{t}$ with  $|E_\textnormal{t}| \in \mathcal{O}(n)$. 
\end{proof}

\begingroup
\def\thetheorem{\ref{THM:hausdorff}}
\thmHausdorff*
\endgroup
\begin{proof}
We compute the set of valid shortcuts by  considering the polylines one-by-one in an arbitrary order. To avoid redundant computations along shared parts, we store the result for already considered node pairs. Accordingly, the total number of potential shortcuts that need to be checked is in $\mathcal{O}(n^2)$. The time $T_d$ to check the validity of a shortcut is in $\mathcal{O}(n)$ for both $d=d_F$ and $d=d_H$. Therefore, the total construction time is in $\mathcal{O}(n^3)$.

 However, for $d_H$ we can do better by leveraging the sweep method from Chan and Chin for shortcut computation for single polylines in time $\mathcal{O}(n^2)$ \cite{cha96}. But we cannot apply that method to each polyline individually, though,  as then the running time would be again in $\mathcal{O}(n^3)$. Hence to get an improvement, we have to avoid redundant computations along shared parts.

 We first consider the shortcuts directed towards the root point of the tree bundle. Using the sweep algorithm, we consider at most $n$ starting points and  for each point there are at most $n$ points on the unique path that connects the point to the root. Hence all shortcuts pointing towards the root can be computed in time $\mathcal{O}(n^2)$. In the reverse direction, we exploit the tree graph $G_\textnormal{t}$ as follows. For each point $p$, we compute the shortcuts starting at $p$  by exploring the subtree rooted at $v_p$ in $G_\textnormal{t}$ in a depth-first search manner. For each branching point in the tree, we make a copy of the cone which describes the possible set of shortcut endpoints and operate on that copy in the respective subtree. Apart from that, we proceed exactly as in the classical Chan and Chin algorithm. As each subtree contains at most $n$ nodes and as we need at most one copy of the region at any point, we can indeed compute all shortcuts starting at $p$ in time linear in the total number of points. Hence also the set of shortcuts pointing away from the root can be computed in time $\mathcal{O}(n^2)$.  In conclusion, we can compute the shortcut graph for a given PTB in $\mathcal{O}(n^2)$ when using the Hausdorff distance.
\end{proof}

\begingroup
\def\thetheorem{\ref{THM:correct}}
\thmCorrect*
\endgroup
\begin{proof}
Correctness can be shown by structural induction. The induction hypothesis is that node $v$ gets assigned the optimal simplifcation size $s_{OPT}(v)$ for its sub-tree assuming $v$ is kept in the solution. For leaf nodes (that get assigned a value of 1), correctness is obvious. Now we consider some non-leaf node $v$ and assume that all nodes  $w$ in the respective sub-tree received correct solution size values, that is, $s(w) = s_{OPT}(w)$. Let $C$ be the the set of outgoing shortcut edges emerging from $v$ in an optimal simplification of $Sub(v)$, leading to  an induced simplification size of $s_{OPT}(v) = 1+\sum_{(v,w) \in C} s(w)$. In the DP, each node $w$ with $(v,w) \in C \subseteq E_\textnormal{s}$ gets assigned the helping value $h(w) = s(w)$ when processing $v$.  Based on the propagation of the $h$-value towards the root node, we observe that $h(v)  \leq \sum_{(v,w) \in C} s(w)$ and therefore $s(v) \leq 1+ \sum_{(v,w) \in C} s(w) = s_{OPT}$. We can never get $s(v) < s_{OPT}(v)$ as the value assigned to $v$  always represents a valid simplification of $Sub(v)$. Hence we conclude that $s(v)=s_{OPT}(v)$.

  The time to compute $s(v)$ is in $\mathcal{O}(|Sub(v)|+ |\{(v,w) \in E_s\}|)$ as it requires to consider all nodes in the subtree rooted at $v$ (in post-order) and to check all of its out-neighbors in $G_\textnormal{s}$. Both of these sets have a  size that is upper bounded by the number of nodes in $Sub(v)$, and the post-order can be determined in $\mathcal{O}(|Sub(v)|)$ using  a DFS run. Accordingly, the respective computation time is in  $\mathcal{O}(n)$ for each $v$ and in $\mathcal{O}(n^2)$ for all nodes $v \in V$ combined. The post-order traversal to determine the global order in which the nodes are processed takes only  linear time in $n$  (again using a DFS run). Hence altogether, we have two nested post-order traversals with a total running time of $O(n^2)$. 
\end{proof}
\newpage
\subsection{Additional Experimental Results}
Finally, we provide more detailed experimental results and illustrations of the used benchmark instances.

\subsubsection{Further Tree Bundle Results.}
In Figure \ref{FIG:treeplot}, an example tree bundle instance is depicted along with DP results for a broad range of test instances. It can be  observed that the optimal number of nodes in the simplification converges for growing values of $\delta$ to the number of leaf nodes
in the tree graph, as those have to be kept by definition. But already for small $\delta$, the simplification size comes close to that lower bound.
\begin{figure}
\begin{minipage}{0.6\textwidth}
\includegraphics[width=\textwidth]{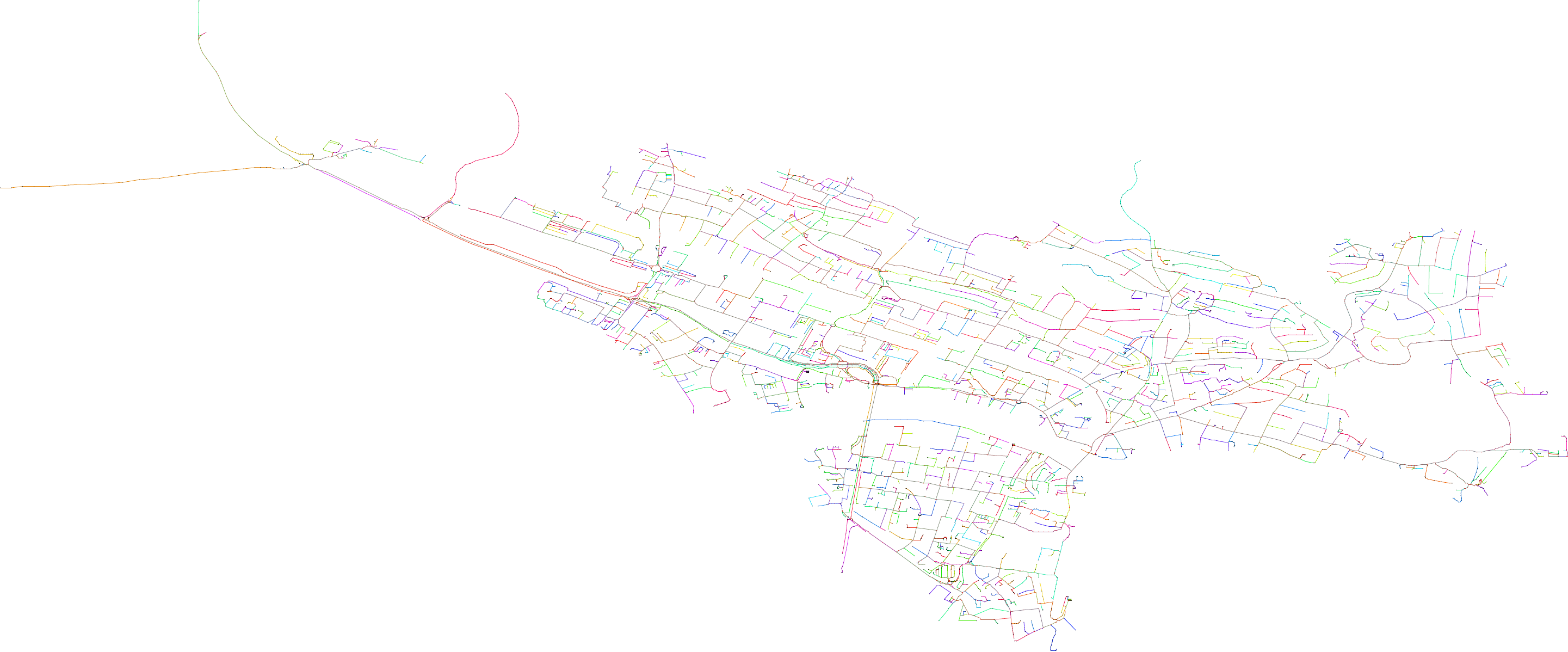}
\end{minipage}\hfill
\begin{minipage}{0.38\textwidth}
\includegraphics[width=\textwidth]{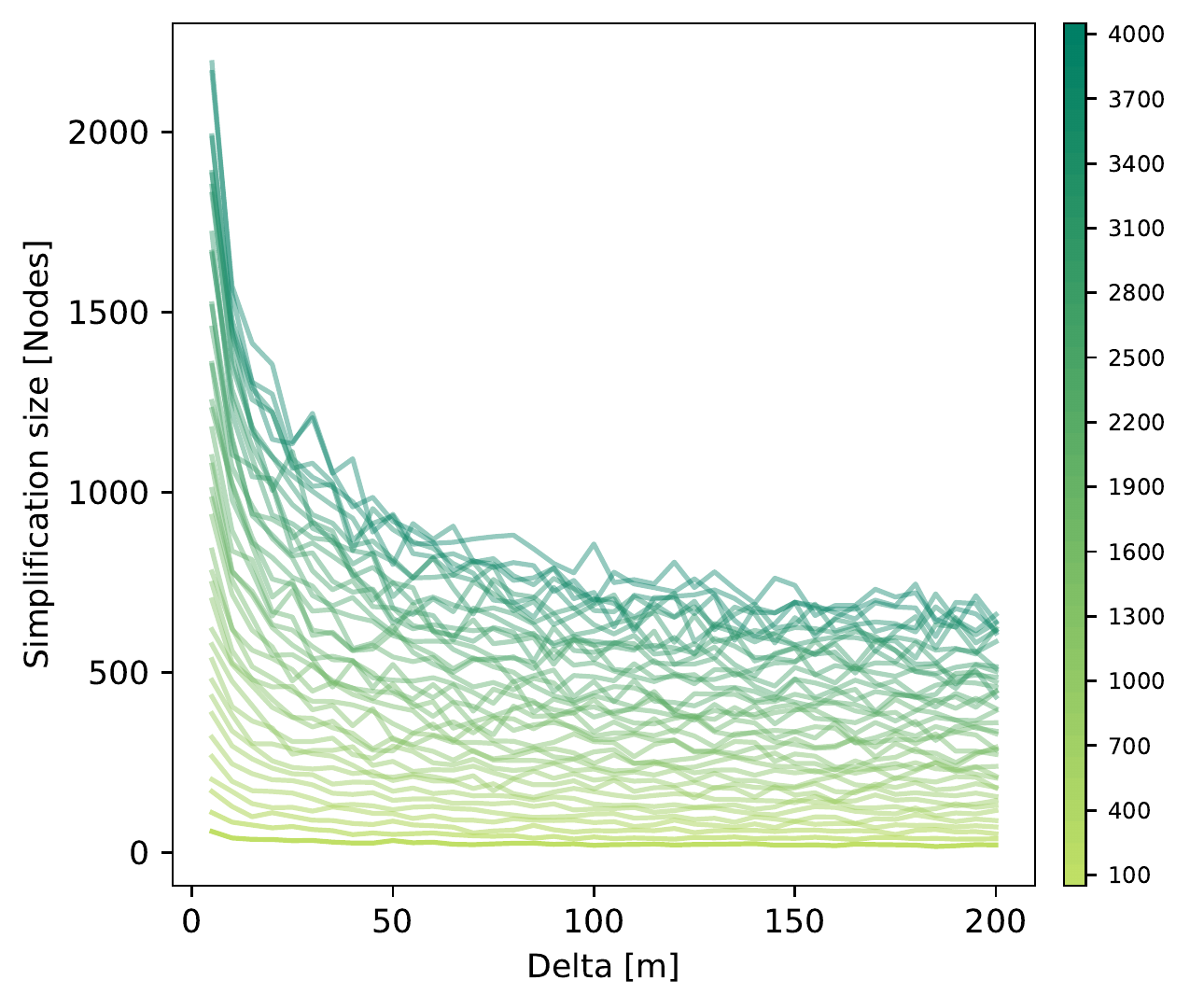}
\end{minipage}
\caption{Left: Example of a road network tree bundle with 1365 polylines.
	Right: Number of retained points in an optimal simplification for tree bundles of different input sizes indicated by line color (legend on the right side) and distance thresholds (on the $x$-axis).
	The largest $\delta$ was chosen so that no further simplification is possible.
	Hence the respective simplification size for $\delta = 200$\,m (the right side of each curve) corresponds to the number of polylines in the input.
	One can nicely see that already for relatively small $\delta$ (around 50\,m), the size of the simplification is close to this minimum.}\label{FIG:treeplot}
\end{figure}
\subsubsection{Detailed TBD Results.} Our greedy tree bundle decomposition heuristic unsurprisingly performed well on inputs with large tree-like structures as Freiburg and Stuttgart but worse on instances as Manhattan or Chicago with large grid-like substructures. For example, the public transit network of Stuttgart with 83 points was decomposed into 12 trees, see Figure \ref{FIG:decomp}, top. The decomposition set consists of 24 nodes of which 22 are endpoints of a line and hence have to be included in the simplification anyway. The individual simplification of the trees for $\delta=0.01$ then added only 12 additional nodes. The public transit network of Freiburg with 785 bends was decomposed in 178 trees, see Figure \ref{FIG:decomp}, bottom. Hereby, 195 nodes in the decomposition stemmed from line endpoints and only 63 additional root nodes were selected. The simplification of the trees then added 103 nodes for $\delta=0.002$ and 339 nodes for $\delta=0.0002$. The network of Chicago with 9,984 points was decomposed into 2649 trees, see Figure \ref{FIG:chicago}. Basically any crossing point in the grid structure was added to $D$, resulting in a   limited simplification capability in the resulting  trees. 

\begin{figure}
\centering
\includegraphics[width=0.4\textwidth]{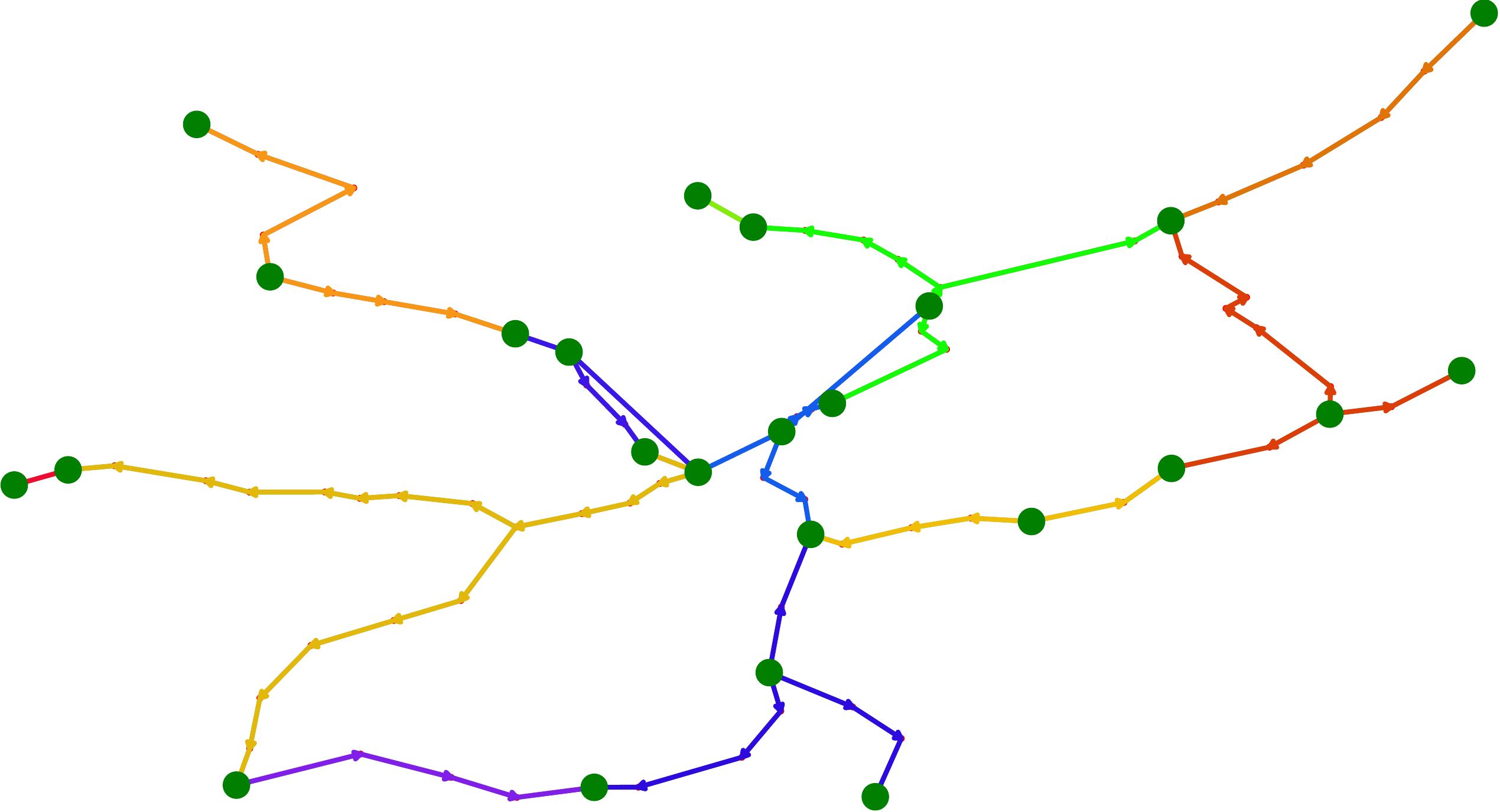}\hfill
\includegraphics[width=0.4\textwidth]{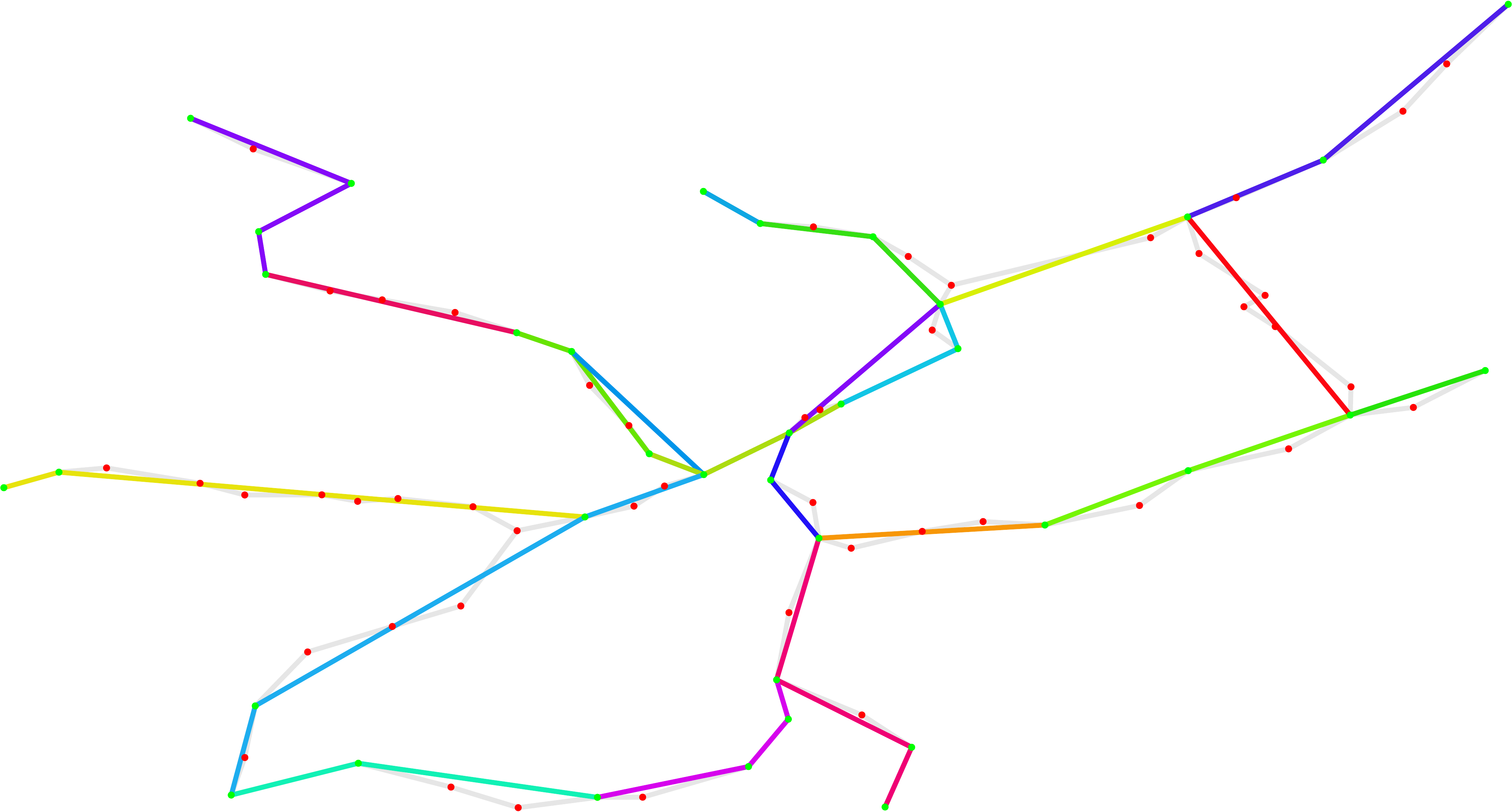}
\bigskip\\
\includegraphics[width=0.8\textwidth]{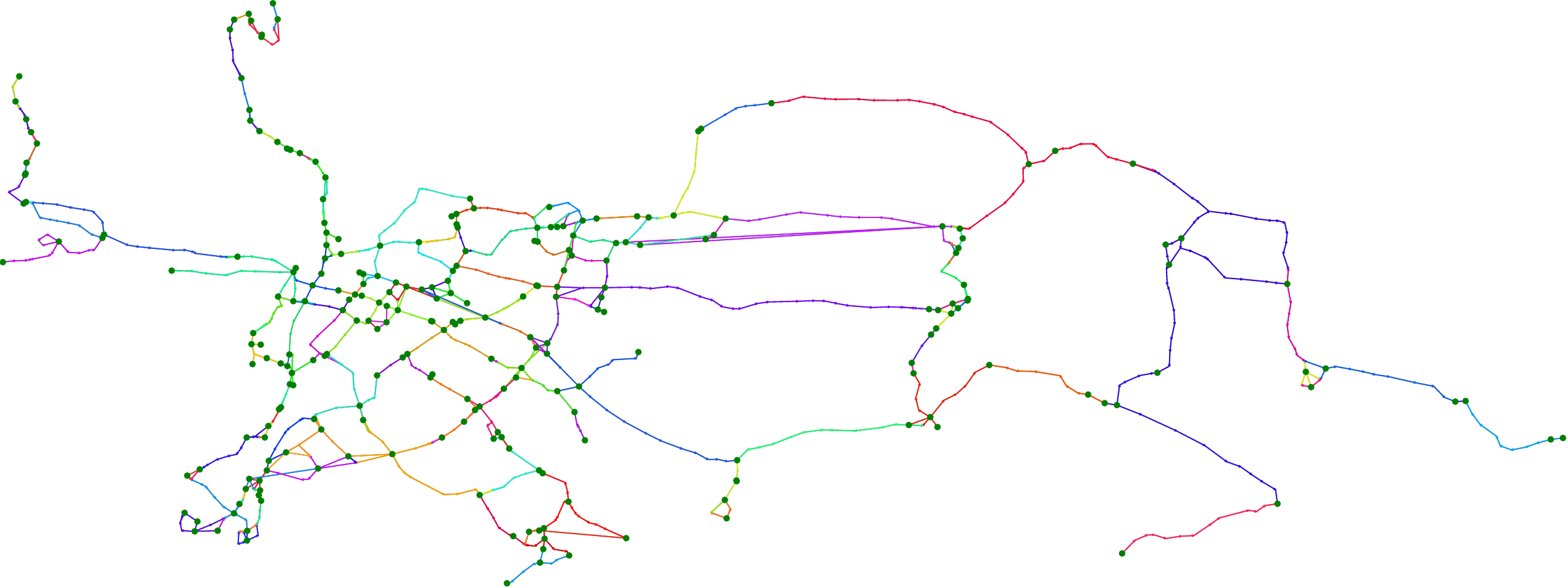}
\caption{Tree decompositions computed with our greedy heuristic. Large green points indicate the nodes in the decomposition set $D$. Each bus and train line received its own random color. Upper row: Decomposition result for the Stuttgart network on the left and resulting simplification on the right (with the original network in the background for comparison). Lower image: Decomposition result for the Freiburg network.}\end{figure}

\begin{figure}
\centering
\includegraphics[width=0.75\textwidth]{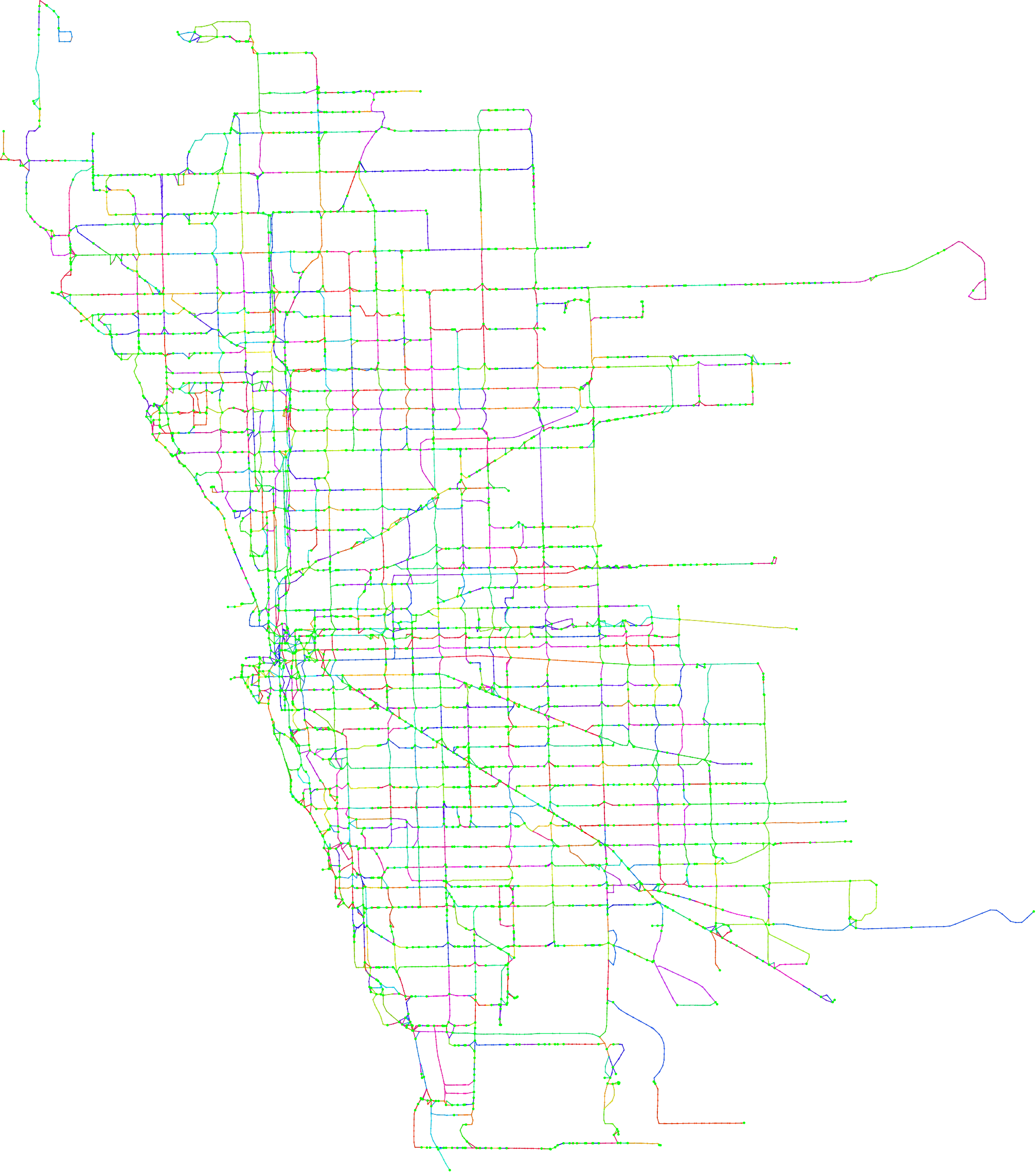}\caption{Chicago public transit network network wth greedy tree decomposition.}\label{FIG:chicago}
\end{figure}\label{FIG:decomp}

\subsubsection{Detailed Comparison of TBD+DP and BCA on General Bundles.}
In Table \ref{TAB:tree-decomp-results}, some results for two of the  transit networks  as well as a large road network bundle for  TBD+DP as well as  BCA are summarized. We observe that TBD+DP is significantly better than BCA on the road network instance in terms of both, quality and running time. But on the Chicago public transit networks, BCA produces the smaller simpification, even if forced to obtain the original $\delta$-constraint (compare the 4601 nodes from TBD+DP to the 1781 from BCA where both solutions obey $\delta \leq 2$).  The reason for this is  the grid-like structure of that network as shown in Figure  \ref{FIG:chicago}. But on other instances, TBD+DP produces high-quality simplification results quickly, as illustrated for a large  road network bundle  in Figure \ref{FIG:bigbundle}.
 \begin{table}
 	\caption{Comparison of BCA and TBD+DP on public transit networks and  a large path bundle extracted from a road network. Timings are in milliseconds.\smallskip\\}
	\label{TAB:tree-decomp-results}
\centering
	\begin{tabular}{|l||l|r|r|r|r|r|r|}
	\hline
	bundle & & $\delta \cdot 10^4$ & $\delta_F \cdot 10^4$ & $\delta_F/\delta$ &  $n+\ell$ & $|S|$ & time \\
	\hline\hline
Stuttgart&	TBD+DP &    10.00   & 9.60  & 0.96     &   83 + 102 &\textbf{36} & 1 \\
	\hline
	&BCA    & 10.00   & 9.60  & 0.96   &   83 + 102   &   {37} & 5  \\\hline\hline
Chicago	&TBD+DP & 2.00  & 1.99 & 0.99  & 9,984 + 872  & {4601}& 306 \\
	\hline
	&TBD+DP& 0.50 & 0.50 & 1.00   & 9,984 + 872      &  {6334}& 185  \\
	\hline
	&BCA   & 2.00  & 3.69 & 1.84    & 9,984 + 872  &  \textbf{864} & 634 \\
	\hline
	&BCA    & 1.00  & 1.92 & 1.92 & 9,984 + 872    &  \textbf{1781}& 789 \\
	\hline
	&BCA    & 0.50 & 0.85& 1.60 & 9,984 + 872     &  \textbf{5023}& 456\\
	\hline\hline
Path bundle	&TBD+DP & 2.00 & 1.96 & 0.98   &10,633 + 17,662 &  \textbf{2681}& 472\\
	\hline
	&BCA   & 2.00  & 2.33 & 1.17   &10,633 + 17,662   & {4915}& 13,700 \\
	\hline
	&BCA    & 1.00  & 1.44 & 1.44  &10,633 + 17,662   & {5017}& 12,600 \\
	\hline
	\end{tabular}
\end{table}

\begin{figure}
\centering
\includegraphics[width=\textwidth]{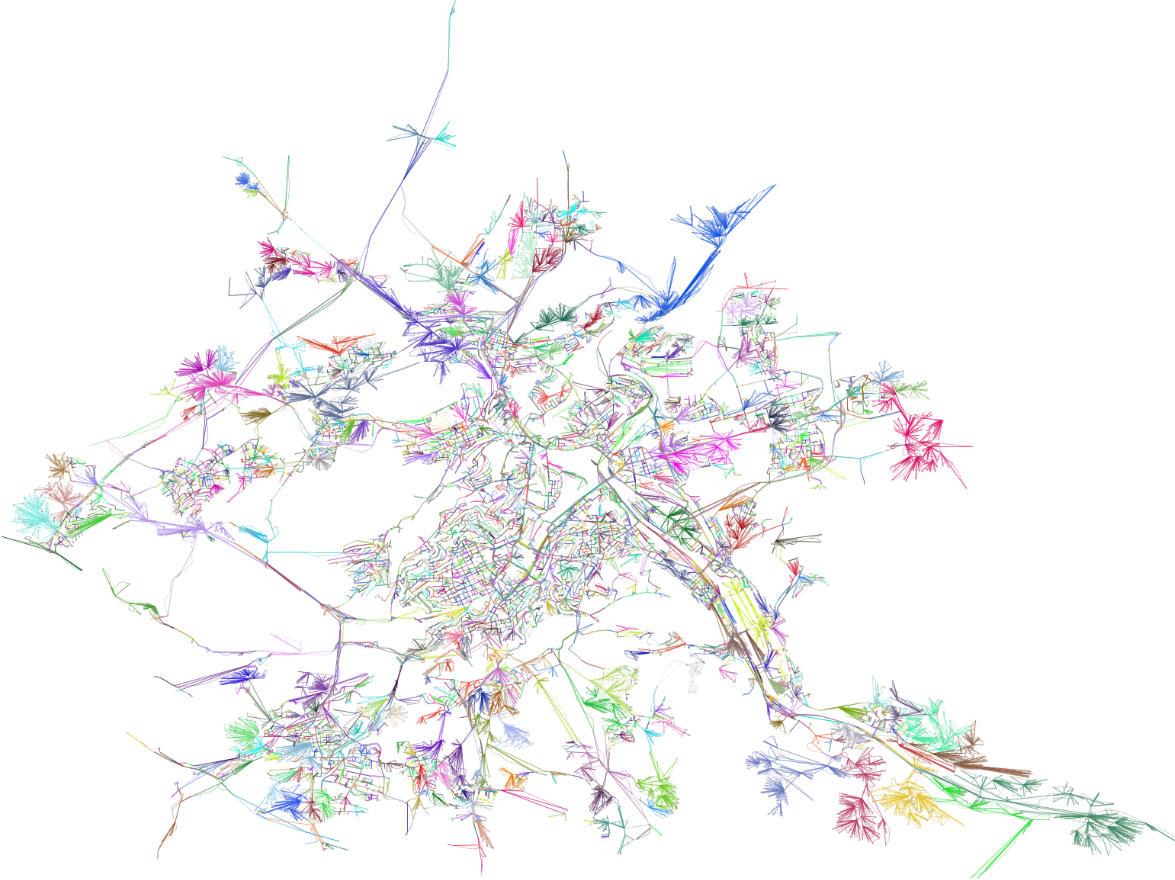}\caption{Large polyline bundle extracted from an underlying road network together with the TBD+DP based simplification.}\label{FIG:bigbundle}
\end{figure}

\end{document}